\newcommand{\BeginMyItemize}{\begin{itemize}\setlength{\itemsep}{-\parskip}}
\newcommand{\EndMyItemize}{\end{itemize}}
\newcommand{\myitemize}[1]{\BeginMyItemize #1 \EndMyItemize}
\newcommand{\BeginMyEnumerate}{\begin{enumerate}\setlength{\itemsep}{-\parskip}}
\newcommand{\EndMyEnumerate}{\end{enumerate}}
\newcommand{\myenumerate}[1]{\BeginMyEnumerate #1 \EndMyEnumerate}
\renewcommand{\leq}{\leqslant}
\renewcommand{\geq}{\geqslant}
\newtheorem{defin}{Definition}
\newtheorem{theo}[defin]{Theorem}
  \newenvironment{theorem}{\begin{theo} \sl}{\end{theo}}
\newtheorem{lem}[defin]{Lemma}
  \newenvironment{lemma}{\begin{lem} \sl}{\end{lem}}
\newtheorem{propo}[defin]{Proposition}
  \newenvironment{proposition}{\begin{propo} \sl}{\end{propo}}
\newtheorem{coro}[defin]{Corollary}
  \newenvironment{corollary}{\begin{coro} \sl}{\end{coro}}
\newtheorem{obse}[defin]{Observation}
\newtheorem{rem}[defin]{Remark}
  \newenvironment{remark}{\begin{rem} \rm}{\end{rem}}
\newtheorem{myfact}[defin]{Fact}
\newenvironment{proof}{\emph{Proof.}}{\hfill $\Box$ \medskip\\}
\newenvironment{myquote}%
  {\list{}{\leftmargin=4mm\rightmargin=4mm}\item[]}%
  {\endlist}
\newenvironment{claiminproof}{\begin{myquote}\noindent\emph{Claim.}}{\end{myquote}}
\newenvironment{proofinproof}{\begin{myquote}\noindent\emph{Proof.}}{\hfill $\diamond$ \end{myquote}}
\newcommand{\Reals}{{\Bbb R}}
\DeclareMathOperator{\polylog}{polylog}
\newcommand{\pa}{\mathrm{parent}}        
\newcommand{\C}{\ensuremath{\mathcal{C}}}
\newcommand{\E}{\ensuremath{\mathcal{E}}}
\newcommand{\subdiv}{\ensuremath{\mathcal{S}}}
\newcommand{\T}{\ensuremath{\mathcal{T}}}
\newcommand{\V}{\ensuremath{\mathcal{V}}}
\newcommand{\opt}{\mbox{{\sc opt}}\xspace}
\newcommand{\smallopt}{\mathrm{opt}}
\newcommand{\Xopt}{X_{\smallopt}}
\newcommand{\ol}{\overline}
\newcommand{\eps}{\varepsilon}
\newcommand{\etal}{\emph{et al.}\xspace}
\newcommand{\proj}[1]{\overline{#1}}
\newcommand{\cut}{\Xi}
\title{Removing Depth-Order Cycles Among Triangles: \\ An Efficient Algorithm
       Generating Triangular Fragments}
\author{Mark de Berg\thanks{Department of Computer Science, TU Eindhoven,
            P.O.~Box 513, 5600 MB Eindhoven, the Netherlands.
            Email: {\tt m.t.d.berg@tue.nl}.
            MdB is supported by the Netherlands' Organisation for
            Scientific Research (NWO) under project no.~024.002.003.}}
\date{}
\begin{document}

\maketitle

\begin{abstract}
More than 25 years ago, inspired by applications in computer graphics, Chazelle~\etal (FOCS 1991)
studied the following question: Is it possible to cut any set of $n$ lines
or other objects in $\Reals^3$ into a
subquadratic number of fragments such that the resulting fragments admit a depth order?
They managed to prove an $O(n^{9/4})$ bound on the number of fragments, but only for
the very special case of bipartite weavings of lines. Since then only little progress
was made, until a recent breakthrough by Aronov and Sharir (STOC 2016) who
showed that $O(n^{3/2}\polylog n)$ fragments suffice for any set of lines.
In a follow-up paper Aronov, Miller and Sharir (SODA 2017) proved an $O(n^{3/2+\eps})$ bound
for triangles, but their method uses high-degree algebraic arcs to perform the cuts.
Hence, the resulting pieces have curved boundaries.
Moreover, their method uses polynomial partitions, for which
currently no algorithm is known.
Thus the most natural version of the problem is still wide open:
Is it possible to cut any collection of $n$ disjoint triangles in $\Reals^3$ into a subquadratic
number of triangular fragments that admit a depth order? And if so, can we
compute the cuts efficiently?

We answer this question by presenting an algorithm that cuts any set of $n$ disjoint
triangles in $\Reals^3$ into $O(n^{7/4}\polylog n)$ triangular fragments that
admit a depth order. The running time of our algorithm is $O(n^{3.69})$.
We also prove a refined bound that depends on the number, $K$, of intersections between the
projections of the triangle edges onto the $xy$-plane: we show that
$O(n^{1+\eps} + n^{1/4} K^{3/4}\polylog n)$ fragments suffice to obtain a depth order.
This result extends to $xy$-monotone surface patches bounded by a constant number
of bounded-degree algebraic arcs in general position, constituting the first
subquadratic bound for surface patches. Finally, as a byproduct of our approach
we obtain a faster algorithm to cut a set of lines into $O(n^{3/2}\polylog n)$
fragments that admit a depth order. Our algorithm for lines runs in $O(n^{5.38})$ time,
while the previous algorithm uses $O(n^{8.77})$ time.
\end{abstract}

\section{Introduction}
\label{se:intro}
Let $T$ and $T'$ be two disjoint triangles (or other objects) in $\Reals^3$.
We that $T$ is \emph{below} $T'$---or, equivalently, that $T'$ is
\emph{above} $T$---when there is a vertical line~$\ell$ intersecting both $T$ and $T'$
such that $\ell\cap T$ has smaller $z$-coordinate than $\ell\cap T'$. We denote this
relation by $T\prec T'$. Note that two triangles may be unrelated by the
$\prec$-relation, namely when their vertical projections onto the $xy$-plane are disjoint.
Now let $\T$ be a collection of $n$ disjoint triangles in $\Reals^3$.
A \emph{depth order} (for the vertical direction) on $\T$ is a total order on
$\T$ that is consistent with the
$\prec$-relation, that is, an ordering $T_1,\ldots,T_n$ of the triangles such
that $T_i\prec T_j$ implies $i<j$.

Depth orders play an important role in many applications. For example, the
Painter's Algorithm from computer graphics performs hidden-surface
removal by rendering the triangles forming the objects in a scene one by one,
drawing each triangle ``on top of'' the already drawn ones. To give the correct result the
Painter's Algorithm must handle the triangles in depth order with respect to the viewing direction.
Several object-space hidden-surface removal algorithms and ray-shooting data structures
need a depth order as well. Depth orders also play a role when one wants
to assemble a product by putting its constituent parts one by one into place
using vertical translations~\cite{wl-grma-94}.
The problem of computing a depth order for a given set of objects
has therefore received considerable
attention~\cite{aks-rsisf-95,b-rsdoh-93,bg-vrscd-08,bos-cvdo-94}.
However, a depth order does not always exist since
there can be \emph{cyclic overlap}, as illustrated in Fig.~\ref{fi:cyclic-overlap}(i).
In such cases the algorithms above simply report that no depth exists.
What we would then like to do is to cut the triangles into fragments
such that the resulting set of fragments is acyclic (that is, admits a depth order).
This gives rise to the following problem: How many fragments are needed in the worst case
to ensure that a depth order exists? And how efficiently can we compute
a set of cuts resulting in a small set of fragments admitting a depth order?
\begin{figure}[h]
  \centering
  \includegraphics{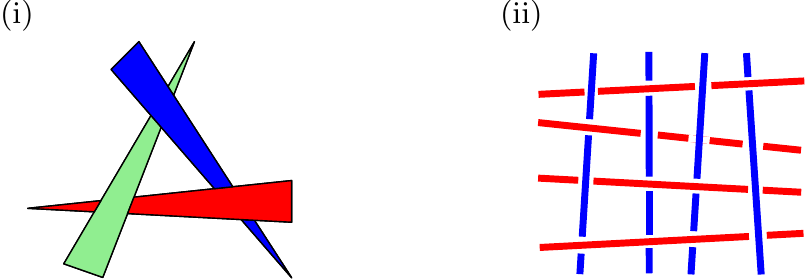}
  \caption{(i) Three triangles with cycle overlap.  (ii) A bipartite weaving.}
  \label{fi:cyclic-overlap}
\end{figure}

The problem of bounding the worst-case number of fragments needed to remove
all cycles from the depth-order relation has a long history.
In the special case of lines (or line segments)
one can easily get rid of all cycles using $O(n^2)$ cuts:
project the lines onto the $xy$-plane and cut each line at
its intersection points with the other lines.
A lower bound on the worst-case number of cuts is $\Omega(n^{3/2})$~\cite{cegpsss-ccclr-92}.
It turned out to be amazingly hard to get any subquadratic upper bound.
In 1991 Chazelle~\etal~\cite{cegpsss-ccclr-92} obtained such a bound, but
only for so-called bipartite weavings; see Fig.~\ref{fi:cyclic-overlap}(ii).
Moreover, their $O(n^{9/4})$ bound is
still far away from the $\Omega(n^{3/2})$ lower bound. Later Aronov~\etal~\cite{aks-ctcls-05}
obtained a subquadratic upper bound for general sets of lines, but they only get of all triangular
cycles---that is, cycles consisting of three lines---and their bounds are only slightly subquadratic:
they use $O(n^{2-1/69}\log^{16/69} n)$ cuts to remove all triangular cycles.
(They obtained a slightly better bound of $O(n^{2-1/34}\log^{8/17}n)$ for removing
all so-called elementary triangular cycles.) Finally, several authors studied the
algorithmic problem of computing a minimum-size complete cut set---a \emph{complete cut set}
is a set of cuts that removes all cycles from the depth-order relation---for a set of lines
(or line segments). Solan~\cite{s-ccoris-98} and Har-Peled and Sharir~\cite{hs-oplpa-01}
gave algorithms that produce a complete cut sets of size roughly $O(n\sqrt{\opt})$,
where $\opt$ is the minimum size of any complete cut set for the given lines.
Aronov~\etal~\cite{abgm-ccrs-08} showed that
this problem is {\sc np}-hard, and they presented an algorithm that computes
a complete cut set of size $O(\opt \cdot \log\opt\cdot \log\log\opt)$
in $O(n^{4+2\omega} \log^2n)=O(n^{8.764})$ time, where
$\omega<2.373$ is the exponent of the best matrix-multiplication algorithm.

Eliminating depth cycles from a set of triangles
is even harder than it is for lines. The trivial bound
on the number of fragments is $O(n^3)$, which can for instance be obtained by taking a vertical
cutting plane containing each triangle edge. Paterson and Yao~\cite{py-ebsph-90}
showed already in 1990 that any set of disjoint triangles admits a so-called
\emph{binary space partition} (BSP) of size $O(n^2)$, which immediately implies
an $O(n^2)$ bound on the number of fragments needed to remove all
cycles. Indeed, a BSP ensures that the resulting set of triangle fragments is acyclic
for any direction, not just for the vertical direction.
Better bounds on the size of BSPs are known for fat objects (or, more generally,
low-density sets)~\cite{b-lsbsp-00} and for axis-aligned
objects~\cite{agmv-bspfr-00,py-obspo-92,t-bspaafr-08}, but
for arbitrary triangles there is an $\Omega(n^2)$ lower bound on the worst-case
size of a BSP~\cite{c-cpp-86}. Thus to get a subquadratic bound on the number of fragments
needed to obtain a depth order, one needs a different approach.
\medskip

In 2016, using Guth's polynomial partitioning technique~\cite{g-ppsv-15},
Aronov and Sharir~\cite{as-atbedc-16} achieved a breakthrough in the area
by proving that any set of $n$ lines in $\Reals^3$ in general position
can be cut into $O(n^{3/2} \polylog n)$ fragments such that the resulting set
of fragments admits a depth order. A complete cut set of size $O(n^{3/2}\polylog n)$ can
then be computed using the algorithm of Aronov~\etal\cite{abgm-ccrs-08} mentioned above.
(They also gave a more refined bound for line segments, which depends on the number
of intersections, $K$, between the segments in the projection. More precisely, they
show that $O(n + n^{1/2} K^{1/2}\polylog n)$ cuts suffice.)
In a follow-up paper, Aronov, Miller and Sharir~\cite{ams-adcat-17}
extended the result to triangles: they show that, for any fixed $\eps>0$,
any set of disjoint triangles in general position can be cut into
$O(n^{3/2+\eps})$ fragments that admit a depth order. This may seem to almost settle the problem
for triangles, but the result of Aronov, Miller and Sharir has two serious drawbacks.
\myitemize{
\item The technique does not result in triangular
      fragments, since it cuts the triangles using algebraic arcs. The degree of these
      arcs is exponential in the parameter $\eps$ appearing in $O(n^{3/2+\eps})$ bound.
\item The technique, while being in principle constructive, does not give an efficient
      algorithm, since currently no algorithms are known for constructing
      Guth's polynomial partitions.
}
Arguably, the natural way to pose the problem for triangles is that one requires
the fragments to be triangular as well---polygonal fragments can always be decomposed
further into triangles, without increasing the number of fragments asymptotically---so
especially the first drawback is a major one. Indeed, Aronov, Miller and Sharir
state that ``\emph{It is a natural open problem to determine
whether a similar bound can be achieved with straight cuts [\ldots].
Even a weaker bound, as long as it is subquadratic and generally applicable, would be of
great significance.}'' Another open problem stated by Aronov, Miller and Sharir
is to extend the result to surface patches: ``\emph{Extending the technique to curved objects (e.g.,
spheres or spherical patches) is also a major challenge.}''

\paragraph{Our contribution.}
We prove that any set $\T$ of $n$ disjoint triangles in $\Reals^3$ can be cut
into $O(n^{7/4}\polylog n)$ triangular fragments that admit a depth order.
Thus we overcome the first drawback of the method of Aronov, Miller and Sharir
(although admittedly our bound is not as sharp as theirs).
We also overcome the second drawback, by presenting an
algorithm to perform the cuts in $O(n^{5/2+\omega/2}\log^2 n)=O(n^{3.69})$ time.
Here $\omega<2.373$ is, as above, the exponent of the best matrix-multiplication algorithm.
As a byproduct, we improve the time to compute a complete cut set of size $O(n^{3/2}\polylog n)$ for
a collection of lines: we show that a simple trick reduces the
running time from $O(n^{4+2\omega}\log^2 n)$ to $O(n^{3+\omega}\log^2 n)$.

We also present a more refined approach that yields a bound of
$O(n^{1+\eps} + n^{1/4} K^{3/4}\polylog n)$ on the number of fragments, where
$K$ is the number of intersections between the triangles in the projection.
This result extends to $xy$-monotone surface patches bounded by a constant number
of bounded-degree algebraic arcs in general position. Thus we make
progress on all open problems posed by Aronov, Miller and Sharir.

Finally, as a minor contribution we get rid of the non-degeneracy
assumptions that Aronov and Sharir~\cite{as-atbedc-16} make when eliminating
cycles from a set of segments. Most degeneracies can be handled by a straightforward
perturbation argument, but one case---parallel segments that overlap
in the projection---requires some new ideas. Being able
to handle degeneracies for segments implies that our method for
triangles can handle degeneracies as well.

\section{Eliminating cycles among triangles}
\label{se:alg}

\paragraph{Overview of the method.}
We first prove a proposition
that gives conditions under which the existence of a depth order for a set of
triangles is implied by the existence of a depth order for the triangle edges.
The idea is then to take a complete cut set for the triangle edges---there is
such a cut set of size $O(n^{3/2}\polylog n)$ by the results of Aronov and Sharir---and
``extend'' the cuts (by taking vertical planes through the cut points)
so that the conditions of the proposition are met.
A straightforward extension would generate too many triangle fragments, however.
Therefore our cutting procedure has two phases. In the first phase we localize
the problem by partitioning space into regions such that (i) the collection
of regions admits a depth order, and (ii) each region is intersected by only
few triangles. (This localization is also the key to speeding up the algorithm
for lines.) In the second phase we then locally (inside each region)
extend the cuts from a complete cut set for the edges,
so that the conditions of the proposition are met.

\paragraph{Notation and terminology.}
Let $\T$ denote the given set of disjoint non-vertical triangles, let $\E$ denote the set
of edges of the triangles in $\T$, and let $\V$ denote the set of vertices of the triangles.
We assume the triangles in $\T$ are closed. However, at the places where a triangle is cut it becomes
open. Thus the edges of a triangle fragment that are (parts of) edges in $\E$ are part
of the fragment, while edges that are induced by cuts are not.
We denote the (vertical) projection of an object $o$ in $\Reals^3$ onto the $xy$-plane by $\proj{o}$.

\paragraph{A proposition relating depth orders for edges to depth orders for triangles.}
We define a \emph{column} to be a 3-dimensional region
$C_\Delta := \Delta\times(-\infty,+\infty)$, where $\Delta$ is an open convex polygon on the $xy$-plane.
Our cutting procedure is based on the following proposition.
\begin{proposition}\label{prop:column}
Let $\T$ be a set of disjoint triangles in $\Reals^3$, and let $\E$ be the set
of edges of the triangles in $\T$.
Let $C_\Delta$ be a column whose interior does not contain a vertex of any triangle in~$\T$,
and let $\T_\Delta := \{ T\cap C_\Delta : T\in \T \}$ and $\E_{\Delta} := \{ e\cap C : e\in \E \}$.
Then $\T_{\Delta}$ admits a depth order if $\E_{\Delta}$ admits a depth order.
\end{proposition}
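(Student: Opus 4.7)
I would prove the contrapositive: a $\prec$-cycle among $\T_\Delta$ forces a $\prec$-cycle among $\E_\Delta$. Two structural facts drive the argument. First, because $C_\Delta^\circ$ contains no vertex of any triangle, each piece $T\cap C_\Delta$ is a convex polygon whose sides in the plane of $T$ are \emph{full chords} running from one wall of $C_\Delta$ to another; these chords are precisely the edges of $\E_\Delta$ lying in $T$. Second, since the triangles of $\T$ are pairwise disjoint in $\Reals^3$, the sign of the vertical gap between two fixed triangles cannot flip across the (convex, hence connected) overlap region of their projections, so ``$T\prec T'$ somewhere in $C_\Delta$'' is equivalent to ``$T\prec T'$ at every vertical line over $\proj{T\cap C_\Delta}\cap \proj{T'\cap C_\Delta}$.''

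The core step is an edge-witness lemma: if $T\prec T'$ holds in $C_\Delta$, then there exist edges $e\subset T$ and $e'\subset T'$ in $\E_\Delta$ with $e\prec e'$. I would prove it by a two-stage sliding argument. Start from a witness vertical line $\ell$ whose projection lies in the convex region $R:=\proj{T\cap C_\Delta}\cap \proj{T'\cap C_\Delta}$ and slide $\proj{\ell}$ within $R$ until it reaches a side of $R$ contributed by $\partial\proj{T}$; by the first structural observation, this side is (the projection of) some chord $e\in\E_\Delta$ contained in $T$, and the limit line witnesses $e\prec T'$. Now slide along $\proj{e}\cap\proj{T'}$ until the line meets a chord $e'\subset T'$ in $\E_\Delta$, which gives $e\prec e'$.

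The main subtlety I foresee is that the sliding may hit $\partial\Delta$ (a wall of the column) before encountering any chord. This can happen only when the triangle in question contains all of $\Delta$ in its projection and therefore contributes no edge to $\E_\Delta$ at all. I would dispose of such ``cover'' triangles separately: by the second structural fact a cover triangle has a globally fixed relative height to every other triangle intersecting $C_\Delta$, so it can be slotted into any depth order of the remaining triangles by hand and cannot play an essential role in a cycle; hence we may assume from the start that every triangle in the chosen cycle has at least one chord in $\E_\Delta$ on the relevant side of $R$.

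Given the edge-witness lemma, I would take a cycle $T_1\prec\cdots\prec T_k\prec T_1$ of minimum length, apply the lemma to each consecutive pair to obtain edges $e_i\subset T_i$ and $\tilde e_i\subset T_{i+1}$ with $e_i\prec\tilde e_i$, and close the resulting chain into a single cycle in $\E_\Delta$. Here the edges $\tilde e_i$ and $e_{i+1}$ both lie in the plane of $T_{i+1}$, and the freedom in the sliding step can be exploited to arrange that $\proj{\tilde e_i}$ and $\proj{e_{i+1}}$ meet inside $\Delta$, yielding the needed $\prec$-comparison between them. This final chaining is where I expect the real technical work; if an arbitrary choice in the sliding step does not chain directly, I would argue, using the minimality of $k$, that the individual edge-cycles obtained from the pairs must fit together, contradicting the assumed depth order on $\E_\Delta$.
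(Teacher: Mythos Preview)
Your overall strategy---prove the contrapositive, take a minimal cycle, and build an edge cycle witnessing it---matches the paper's.  The gap is in the chaining step, and it is a real one.

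You propose to apply an edge-witness lemma to each relation $T_i\prec T_{i+1}$ separately, obtaining $e_i\subset T_i$ and $\tilde e_i\subset T_{i+1}$ with $e_i\prec\tilde e_i$, and then to link $\tilde e_i$ to $e_{i+1}$ (both in $T_{i+1}$) by arranging that their projections meet inside~$\Delta$.  But this cannot work as stated.  First, $\tilde e_i$ and $e_{i+1}$ lie in the plane of $T_{i+1}$, so even if their projections met, the edges would actually touch in~$\Reals^3$ and neither would be $\prec$ the other---no $\prec$-comparison is available between coplanar segments.  Second, and more basically, distinct edges of $T_{i+1}\cap C_\Delta$ coming from $\E$ \emph{cannot} meet in the interior of~$\Delta$ at all: if they did, the common point would be a vertex of $T_{i+1}$ inside $C_\Delta^\circ$, which is excluded by hypothesis.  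So the only way your chain closes is if $\tilde e_i=e_{i+1}$ for every~$i$, i.e.\ a \emph{single} edge of each $P_i$ serves as both the ``incoming'' and the ``outgoing'' witness.  Your pairwise lemma, applied independently to consecutive pairs, gives no control over this.

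The paper supplies exactly the missing idea.  Rather than picking edge witnesses pair by pair, it fixes a witness curve for the whole cycle, with connection points $a_i,b_i\in P_i$, and minimizes the number of ``bad'' links (those where $a_i,b_i$ lie on different edges of $P_i$).  Minimality of the cycle then forces $\proj{a_{i-1}}$ and $\proj{a_{i+2}}$ to lie in the \emph{same} connected component of $\Delta\setminus\proj{P_i}$ (otherwise the witness curve would recross an edge of $\proj{P_i}$, producing a shortcut that contradicts minimality).  Since $P_i$ has no vertex in~$\Delta$, that component is bounded by a single edge $e$ of $P_i$, and weight-minimality forces both $a_i$ and $b_i$ onto~$e$.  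Thus every link is good and the cycle $e_0\prec e_1\prec\cdots\prec e_{k-1}\prec e_0$ in $\E_\Delta$ falls out directly.  Your sliding argument can be rescued, but only by incorporating this global minimization and the same-component observation; the local edge-witness lemma alone does not suffice.
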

\begin{proof}
For a triangle $T_i\in\T$, define $P_i := T_i \cap C$. Thus $\T_{\Delta} = \{ P_i : T_i\in\T \}$.
Assume $\E_{\Delta}$ admits a depth order and suppose for a contradiction that $\T_{\Delta}$ does not.

Consider a cycle $\C := P_0\prec P_1\prec \cdots \prec P_{k-1}\prec P_0$ in $\T_{\Delta}$.
As observed by Aronov~\etal~\cite{ams-adcat-17}
we can associate a closed curve in $\Reals^3$ to $\C$, as follows.
For each pair $P_i,P_{i+1}$ of consecutive polygons in~$\C$---here and in the rest
of the proof indices are taken modulo~$k$---let $b_i\in P_i$ and $a_{i+1}\in P_{i+1}$ be points such that
the segment~$b_i a_{i+1}$ is vertical. We refer to the
closed polygonal curve whose ordered set of vertices
is $b_1, a_2, b_2, a_3, \ldots, a_k, b_k, a_1$ as a
\emph{witness curve} for $\C$.
We call the vertical segments $b_i a_{i+1}$ the \emph{connections} of $\Gamma(\C)$,
and we call the segments $a_i b_i$ the \emph{links} of $\Gamma(\C)$. Since the
connections are vertical, we have $\proj{a_{i+1}}=\proj{b_{i}}$ and so we can write
$\proj{\Gamma(\C)}$ as $\proj{a_1},\proj{a_2},\ldots,\proj{a_k},\proj{a_1}$.
Note that $\proj{a_i}\in \proj{P_{i-1}}\cap \proj{P_i}$ for all~$i$.
In general, the points $a_i$ and $b_i$ can be chosen in many ways and so there are
many possible witness curves.
We will need a specific witness curve, as specified next.
We say that a link $a_i b_i$ is \emph{good} if $a_i$ and $b_i$
lie on the same edge of their polygon~$P_i$---this
edge is also an edge in $\E_{\Delta}$---and we say that $a_i b_i$ is \emph{bad} otherwise.
We now define the \emph{weight} of a witness curve $\Gamma$ to be the
number of bad links in $\Gamma$, and we define $\Gamma(\C)$
to be any minimum-weight witness curve for $\C$.

Now consider a minimal cycle $\C^*:= P_0 \prec P_1\prec \cdots\prec P_{k-1}\prec P_0$ in $\T_{\Delta}$.
(A cycle is minimal if any strict subset of polygons from the cycle is acyclic.)
We will argue that we
can find a cycle in $\E_{\Delta}$ consisting of edges of the polygons in $\C^*$, thus contradicting
that $\E_{\Delta}$ admits a depth order.

\begin{claiminproof}
All links $a_i b_i$ of $\Gamma(\C^*)$ are good.
\end{claiminproof}
\begin{proofinproof}
Consider any link $a_i b_i$.
Observe that $\proj{a_{i-1}}$ and $\proj{a_{i+2}}$ must both lie
outside $\proj{P_i}$, otherwise $\C^*$ is not minimal. Consider $\Delta\setminus \proj{P_i}$,
the complement of $\proj{P_i}$ inside the column base~$\Delta$. The region
$\Delta\setminus\proj{P_i}$ consists one or more connected components. (It cannot be empty
since then $P_i$ cannot be part of any cycle in $\T_{\Delta}$.)
Each connected component is separated from $\proj{P_i}$ by a single edge of~$\proj{P_i}$,
since by assumption $T_i$ does not have a vertex inside~$C$
and so $\proj{P_i}$ does not have a vertex inside~$\Delta$ either.
We now consider two cases, as illustrated in Fig.~\ref{fi:prop-fig}.
\begin{figure}
  \centering
  \includegraphics{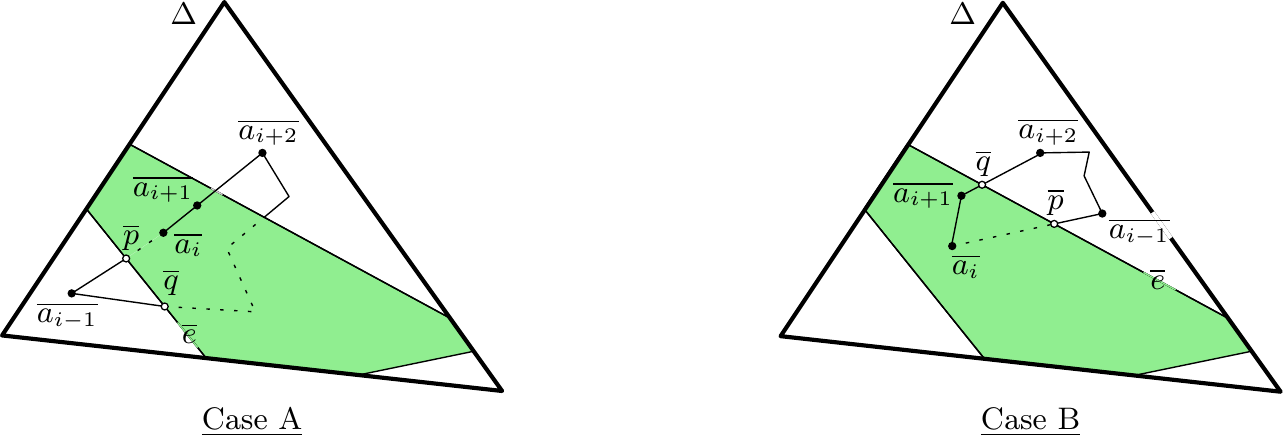}
  \caption{The two cases in the proof of Proposition~\protect\ref{prop:column}.
           Polygon $\ol{P_i}$ is shown in green.}
  \label{fi:prop-fig}
\end{figure}
\\[2mm]
\noindent \emph{Case~A: $\proj{a_{i-1}}$ and $\proj{a_{i+2}}$ lie in different components of~$\Delta\setminus\proj{P_i}$.}
    Let $\proj{p}$ be the point where $\proj{a_{i-1}a_{i}}$ enters $\proj{P_i}$.
    Let $p\in P_i$ project onto $\proj{p}$ and let $e$ be the edge of $P_i$ containing~$p$.
    (Possibly $\proj{p} = \proj{a_i}$.) Since $\proj{a_{i+2}}$ lies in a different
    connected component of $\Delta\setminus\proj{P_i}$ than $\proj{a_{i-1}}$, the projection $\proj{\Gamma(\C^*)}$
    must cross $\proj{e}$ a second time, at some point~$\proj{q}$.
    This leads to a contradicting with the minimality of~$\C^*$.
    To see this, let $q\in\Gamma(\C^*)$ be a point projecting onto~$\proj{q}$
    and let $P_j$ be such that $q\in P_j$. Then $j\not\in\{i-1,i,i+1\}$, because
    $a_{i-1}b_{i-1}$, and $a_ib_i$, and $a_{i+1}b_{i+1}$ are the only links
    of $\Gamma(\C^*)$ on $P_{i-1}$, and $P_i$, and $P_{i+1}$, respectively.
    But since $\ol{P_i}\cap\ol{P_j}\neq\emptyset$ we have $P_i\prec P_j$ or $P_j\prec P_i$,
    and so $j\not\in\{i-1,i,i+1\}$ contradicts that $\C^*$ is minimal.    Thus Case~A cannot happen.
\\[2mm]
\noindent \emph{Case~B: $\proj{a_{i-1}}$ and $\proj{a_{i+2}}$ lie in the same component of~$\Delta\setminus\proj{P_i}$.}
    In this case $a_i b_i$ must be a good link, because $a_i$ and $b_i$ must both lie on
    the edge~$e$ bordering the component of~$\Delta\setminus\proj{P_i}$ that contains $\proj{a_{i-1}}$ and $\proj{a_{i+2}}$.
    Indeed, if $\proj{a_i}$ and/or $\proj{a_{i+1}}$ would not lie on~$e$ then
    we can obtain a witness curve of lower weight for $\C^*$, namely
    if we replace $a_i$ by the point~$p$ such that
    $\proj{p} = \proj{a_{i-1} a_{i}} \cap \proj{e}$ and we replace $a_{i+1}$ by the point~$q$ such that
    $\proj{q} = \proj{a_{1} a_{i+1}} \cap \proj{e}$. (Note that if $a_{i-1}=a_{i+2}$,
    which happens when $\C^*$ consists of only three polygons, then the argument still goes through.)
\\[2mm]
\noindent Thus $a_i b_i$ is a good link, as claimed.
\end{proofinproof}
If all links $a_i,b_i$ of $\Gamma(\C^*)$ are good then $\C^*$
gives a cycle in $\E_{\Delta}$, contradicting that $\E(\sigma)$ admits a depth order.
Hence, the assumption that $\T_{\Delta}$ contains a cycle is false.
\end{proof}

\paragraph{The cutting procedure.}
A naive way to apply Proposition~\ref{prop:column} would be the following:
compute a complete cut set $X$ for the set $\E$ of triangle edges, and take
a vertical plane parallel to the $yz$-plane through each point in $\V\cup X$.
This subdivides $\Reals^3$ into columns $C_\Delta$ (where each column base $\Delta$
is an infinite strip). These columns do not contain triangle vertices
and the edge fragments inside each column are acyclic,
and so the triangle fragments we obtain are acyclic. Unfortunately this straightforward
approach generates too many fragments. Hence, we first subdivide
space such that we do not cause too much fragmentation when we take the
vertical planes through $\V\cup X$. The crucial idea is to create the subdivision
based on the projections of the triangle edges. This allows us to use an efficient
2-dimensional partitioning scheme resulting in cells that are
intersected by only few projected triangles edges. The 2-dimensional subdivision
will then be extended into $\Reals^3$, to obtain 3-dimensional regions
in which we can take vertical planes through $\V\cup X$ without
creating too many fragments. We cannot completely ignore
the triangles themselves, however, when we extend the 2-dimensional subdivision
into~$\Reals^3$---otherwise we already create too many fragments in this phase.
Thus we create a hierarchical 2-dimensional subdivision, and we use the hierarchy
to avoid cutting the input triangles into too many fragments.
Next we make these ideas precise.
\medskip

Let $L$ be a set of $n$ lines in the plane. A \emph{$(1/r)$-cutting} for $L$ is
a partition $\cut$ of the plane into triangular\footnote{The cells
may be unbounded, that is, we also allow wedges, half-planes, and the entire plane, as cells.}
cells such that the interior of any cell $\Delta\in\cut$ intersects at most $n/r$ lines from~$L$.
We say that a cutting $\cut$ \emph{$c$-refines} a cutting~$\cut'$, where $c$ is
some constant, if every cell $\Delta\in \cut$ is contained in a unique parent cell
$\Delta'\in\cut'$, and each cell in $\cut'$ contains at most $c$ cells from~$\cut$.
An \emph{efficient hierarchical $(1/r)$-cutting} for $L$~\cite{m-rsehc-93} is a sequence
$\Psi := \cut_0,\cut_1,\ldots,\cut_k$ of cuttings such that there are constants $c,\rho$ such that
the following four conditions are met:
\myenumerate{
\item[(i)] $\rho^{k-1} < r \leq \rho^k$;
\item[(ii)] $\cut_0$ is the single cell $\Reals^2$;
\item[(iii)] $\cut_i$ is a $(1/\rho^i)$-cutting for $L$ of size $O(\rho^{2i})$, for all $0\leq i\leq k$;
\item[(iv)] $\cut_i$ is a $c$-refinement of $\cut_{i-1}$, for all $1\leq i\leq k$.
}
It is known that for any set $L$ and any parameter $r$ with $1\leq r\leq n$,
an efficient hierarchical $(1/r)$-cutting exists and can be
computed in $O(nr)$ time~\cite{c-chdc-93,m-rsehc-93}.
We can view $\Psi$ as a tree in which each node $u$ at level~$i$ corresponds to a cell
$\Delta_u\in\cut_i$, and a node $v$ at level~$i$ is the child of a node~$u$ at level~$i-1$
if $\Delta_v\subseteq\Delta_u$.

Our cutting procedure now proceeds in two steps. Recall that $\T$ denotes the given set of $n$ triangles
in~$\Reals^3$, and $\E$ the set of $3n$~edges of the triangle in~$\T$.
\begin{enumerate}
\item \label{step1}
    We start by constructing an efficient hierarchical $(1/r)$-cutting for $L$, with $r=n^{3/4}$,
    where $L$ is the set of lines containing the edges in $\proj{\E}$.
    Next we cut the projection $\proj{T}$ of each triangle $T\in\T$ into pieces. This is done
    by executing the following recursive process on~$\Psi$, starting at its root.
    Suppose we reach a node~$u$ of the tree.
    If $\Delta_u\subseteq \proj{T}$ or $u$ is a leaf, then $\Delta_u\cap \proj{T}$
    is one of the pieces of $\proj{T}$.
    Otherwise, we recursively visit all children $v$ of $u$ such that $\Delta_v\cap \proj{T} \neq \emptyset$.

    After cutting each projected triangle $\proj{T}$ in this manner, we cut the original triangles
    $T\in\T$ accordingly. Let $\T_1$ denote the resulting collection of polygonal pieces.

    We extend the 2-dimensional cutting $\cut_k$ into $\Reals^3$ by erecting vertical walls through
    each of the edges in $\cut_k$. Thus we create a column $C_\Delta := \Delta\times (-\infty,\infty)$
    for each cell $\Delta\in \cut_k$. Next, we cut each column $C_\Delta$ into vertical prisms by slicing it
    with each triangle $T\in\T$ that completely cuts through $C_\Delta$ (that is, we
    slice the column with each triangle $T$
    such that $\Delta\subseteq \proj{T}$). Let~$\subdiv$ denote the resulting
    3-dimensional subdivision.
\item \label{step2}
    For each prism $\sigma$ in the subdivision~$\subdiv$, proceed as follows.
    Let $\T_1(\sigma)\subseteq \T_1$ be the set of pieces that have an edge intersecting
    the interior of~$\sigma$, and let $\E(\sigma) := \{ e\cap \sigma : e\in\E\}$.
    Note that $\E(\sigma)$ is the set of edges of the pieces in $\T_1(\sigma)$,
    where we only take the edges in the interior of $\sigma$.
    Let $X(\sigma)$ be a complete cut set for $\E(\sigma)$,
    and let $\V(\sigma)\subseteq \V$ be the set of triangle vertices in the interior of~$\sigma$.
    For each point $q\in X(\sigma)\cup \V(\sigma)$, take a plane $h(q)$ containing $q$ and
    parallel to the $yz$-plane, and let $H(\sigma)$ be the resulting set of planes.
    Cut every piece $P\in \T_1(\sigma)$ into fragments using the planes in $H(\sigma)$.
\end{enumerate}
We denote the set of fragments generated in Step~\ref{step2} inside a prism~$\sigma$
by $\T_2(\sigma)$, and we denote the set of pieces in $\T_1$ that do not have an edge
crossing the interior of any prism~$\sigma\in\subdiv$ by $\T_1^*$.
(Note that $\T_1^{*}$ contains all pieces generated at internal nodes of $\Psi$.)
Then $\T_2 := \T_1^* \cup \bigcup_{\sigma\in\subdiv} \T_2(\sigma)$ is our final set of
fragments.
\begin{lemma}\label{le:number-of-fragments}
The set $\T_2$ of triangle fragments resulting from the procedure above is acyclic,
and the size of $\T_2$ is $O(n^{7/4}+ |X| \cdot n^{1/4})$, where
$X := \bigcup_{\sigma\in\subdiv} X(\sigma)$.
\end{lemma}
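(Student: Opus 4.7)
The plan is to prove acyclicity and the cardinality bound separately, both leaning on Proposition~\ref{prop:column}.

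For acyclicity, I conceptually refine $\subdiv$ by further cutting each prism $\sigma$ with the vertical planes in $H(\sigma)$, producing sub-prisms $\sigma'$. Each such $\sigma'$ has a convex polygonal base $\Delta'$ whose interior is free of projected triangle vertices: vertices outside $\sigma$ are irrelevant to $\sigma'$, while every vertex of $\V(\sigma)$ has been pushed onto $\bd \Delta'$ by the corresponding plane of $H(\sigma)$. Moreover, the planes $h(q)$ with $q \in X(\sigma)$ cut each edge of $\E(\sigma)$ exactly at the points of $X(\sigma)$, so since $X(\sigma)$ is a complete cut set for $\E(\sigma)$, the restricted edge set $\E(\sigma')$ admits a depth order. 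Applying Proposition~\ref{prop:column} to the column $C_{\Delta'}$---checking that its proof goes through for the vertically truncated prism $\sigma'$, because the top and bottom faces of $\sigma'$ are $\T_1^*$ pieces and witness-curve connections can always be shortened to stay inside $\sigma'$---yields a depth order on the triangle fragments inside $\sigma'$. For the global picture, I note that fragments in different $\cut_k$-columns have interior-disjoint projections and are thus $\prec$-unrelated, with the sole exception of $\T_1^*$ pieces from internal $\Psi$-nodes; but such a $\T_1^*$ piece sits on a triangle that fully crosses every column it occupies, so it appears as a bounding face of the prisms it meets and is consistent with the (total) vertical ordering of prisms within each such column. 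Any hypothetical cycle in $\T_2$ therefore collapses into a single sub-prism, contradicting the local depth order.

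For the size bound I combine three ingredients. First, the standard analysis of hierarchical $(1/r)$-cuttings with $r = n^{3/4}$ gives $|\T_1| = O(n r \polylog n) = O(n^{7/4}\polylog n)$, hence $|\T_1^*| \leq |\T_1| = O(n^{7/4}\polylog n)$. Second---and crucially---because the triangles of $\T$ are pairwise disjoint, a leaf piece $P = T \cap C_\Delta$ of a triangle $T$ that does not fully cross $C_\Delta$ cannot meet any slicing triangle, so $P$ lies in a single prism $\sigma \subseteq C_\Delta$; consequently the family $\{\T_1(\sigma)\}_\sigma$ partitions the edge-bearing leaf pieces of $\T_1$, giving $\sum_\sigma |\T_1(\sigma)| \leq |\T_1| = O(n^{7/4}\polylog n)$. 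Separately, $|\T_1(\sigma)| = O(n/r) = O(n^{1/4})$ for every prism $\sigma$, since only $O(n/r)$ lines of $L$ cross its base $\Delta \in \cut_k$. Third, the $|H(\sigma)|$ pairwise parallel vertical planes cut each piece in $\T_1(\sigma)$ into at most $|H(\sigma)|+1$ strips, so
\[
\sum_\sigma |\T_2(\sigma)| \;\leq\; \sum_\sigma |\T_1(\sigma)|\cdot(|H(\sigma)|+1) \;\leq\; \sum_\sigma |\T_1(\sigma)| \;+\; \max_\sigma |\T_1(\sigma)| \cdot \sum_\sigma |H(\sigma)|.
\]
Since each vertex of $\V$ lies in exactly one prism we have $\sum_\sigma |H(\sigma)| = |X| + |\V| = |X| + 3n$, which, combined with the bounds above, yields $\sum_\sigma |\T_2(\sigma)| = O(n^{7/4}) + O(n^{1/4})\cdot(|X| + n) = O(n^{7/4} + |X|\cdot n^{1/4})$. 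Adding the bound on $|\T_1^*|$ gives the claimed size of $\T_2$.

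The main obstacle I expect is the acyclicity argument rather than the counting: specifically, verifying that Proposition~\ref{prop:column} (stated for infinite columns) applies in a vertically truncated sub-prism $\sigma'$, and then ruling out cycles that sneak across prisms via $\T_1^*$ pieces spanning several $\cut_k$-columns. Both difficulties dissolve once one observes that every $\T_1^*$ piece is a portion of a triangle fully crossing every column it occupies, so it functions as a shared bounding face of prisms and introduces no depth relation inconsistent with the column-wise vertical stacking.
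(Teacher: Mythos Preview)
Your overall strategy matches the paper's, and the counting is essentially the same, but two points need fixing.

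\textbf{Acyclicity.} The step ``any hypothetical cycle in $\T_2$ therefore collapses into a single sub-prism'' is not justified by what precedes it. Saying that a $\T_1^*$ piece is ``a bounding face of the prisms it meets'' and is ``consistent with the column-wise vertical stacking'' does not yet rule out a cycle that hops between leaf columns through several $\T_1^*$ pieces sitting at different levels of~$\Psi$. The paper closes this gap with a single clean observation: every object in $\subdiv^* \cup \T_1^{*}$ projects exactly onto some cell $\Delta_u$ of the hierarchy, so for any two such objects $o_1,o_2$ one has $\proj{o_1}\subseteq\proj{o_2}$, or $\proj{o_2}\subseteq\proj{o_1}$, or $\proj{o_1}\cap\proj{o_2}=\emptyset$. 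A minimal cycle of length $\geq 3$ in a set with this nested/disjoint projection property is impossible (one of the first two objects can always be dropped). With $\subdiv^* \cup \T_1^{*}$ acyclic, a cycle in $\T_2$ would force a cycle inside some $\T_2(\sigma)$; the remainder of your argument (sub-prisms $\sigma'$, Proposition~\ref{prop:column} applied inside a truncated column) then finishes exactly as you and the paper both describe. Also note $\T_1^*$ contains \emph{all} fully-crossing pieces, including those created at leaves, not only those from internal nodes.

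\textbf{Size bound.} The efficient hierarchical cutting gives $|\T_1| = O\!\left(\sum_{i=0}^{k-1}\rho^{2i}\cdot n/\rho^{i}\right)=O(n\rho^{k})=O(nr)=O(n^{7/4})$, with no $\polylog$ factor; your stated $O(nr\,\polylog n)$ overshoots and would not match the lemma as written. The rest of your count (the disjointness observation giving $\sum_\sigma |\T_1(\sigma)|\le |\T_1|$, the per-prism bound $|\T_1(\sigma)|=O(n^{1/4})$, and $\sum_\sigma |H(\sigma)|=|X|+O(n)$) is the same as the paper's.
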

\begin{proof}
To prove that $\T_2$ is acyclic, define $\subdiv^*$ to be the set of (open) prisms
in $\subdiv$, and consider the set $\subdiv^* \cup \T_1^{*}$.

\begin{claiminproof}
The set $\subdiv^* \cup \T_1^{*}$ admits a depth order.
\end{claiminproof}
\begin{proofinproof}
By construction, for any object~$o_i\in \subdiv^* \cup \T_1^{*}$ there is
a node $u\in\Psi$ such that $\proj{o_i} = \Delta_u$. Hence,
for any two objects $o_1,o_2\in \subdiv^* \cup \T_1^{*}$ we have
\begin{equation}
\proj{o_1}\subseteq\proj{o_2}, \hspace{5mm} \mbox{or} \hspace{5mm}
\proj{o_2}\subseteq\proj{o_1}, \hspace{5mm} \mbox{or} \hspace{5mm}
\proj{o_1}\cap\proj{o_2}=\emptyset. \label{eq1}
\end{equation}
This implies that $\subdiv^* \cup \T_1^{*}$ is acyclic.
Indeed, suppose for a contradiction
that $\subdiv^* \cup \T_1^{*}$ does not admit a depth order. Consider a minimal
cycle $\C^*:= o_1 \prec o_2\prec \cdots\prec o_{k}\prec o_1$. Obviously $k\geq 3$.
But then (\ref{eq1}) implies that we can remove $o_1$ or $o_2$ and still
have a cycle, contradicting the minimality of~$\C^*$.
\end{proofinproof}
The claim above implies that $\T_2$ is acyclic if each of the sets $\T_2(\sigma)$ is acyclic.
To see that $\T_2(\sigma)$ is acyclic, note that the planes in $H(\sigma)$
partition~$\sigma$ into subcells that do not contain a point from $X(\sigma)$
in their interior. Hence the set of edges of the fragments in such a subcell
is acyclic---if this were not the case, then there would be a cycle left
in $\E(\sigma)$, contradicting that $X(\sigma)$ is a complete cut set for $\E(\sigma)$.
Moreover, a subcell does not contain any point from $\V(\sigma)$ in its interior,
and so it does not contain a vertex of any fragment in its interior.
We can therefore use Proposition~\ref{prop:column} to conclude that within each
subcell, the fragments are acyclic; the fact that the subcell is strictly
speaking not a column---it may be bounded from above and/or
below by a piece in $\T_1^*$---clearly does not invalidate the conclusion.
Since the fragments in each subcell of $\sigma$ are acyclic and the subcells
are separated by vertical planes, $\T_2(\sigma)$ must be acyclic.
\medskip

It remains to prove that $|\T_2|=O(n^{7/4}+ |X|\cdot n^{1/4})$.
We start by bounding~$|\T_1|$. To this end, consider a triangle $T\in \T$
and let $P\in \T_1$ be a piece generated for $T$ in Step~\ref{step1}. Let $v$ be the
node in $\Psi$ where $P$ was created. Then the cell $\Delta_u$ of the parent $u$ of $v$ is intersected by
an edge of $\proj{T}$. Since each node in $\Psi$ has $O(1)$ children and each
cell $\Delta\in\Xi_i$ intersects at most $n/\rho^i$ projected triangle edges, this means that
\[
|\T_1|  =  O\left( \sum_{i=0}^{k-1} \sum_{\Delta\in\cut_i} n/\rho^i  \right)
        =  O\left( \sum_{i=0}^{k-1} \rho^{2i} \cdot (n/\rho^i)  \right)
        =  O(n \rho^{k})
        =  O(n r)
        =  O(n^{7/4}).
\]
The number of additional fragments created in Step~\ref{step2} can be bounded by
observing that each prism $\sigma$ in the subdivision~$\subdiv$ intersects at most
$n/r = O(n^{1/4})$ triangle edges, and so $|T_1(\sigma)|=O(n^{1/4})$. If we now sum
the number of additional fragments over all prisms $\sigma$ in the subdivision~$\subdiv$ we obtain
\[
\begin{array}{lll}
\mbox{number of additional fragments in Step~\ref{step2}} & \leq & \sum_{\sigma\in\subdiv} |H(\sigma)| \cdot |\T_1(\sigma)| \\[2mm]
    & \leq & O(n^{1/4}) \cdot \big( \sum_{\sigma\in\subdiv} |X(\sigma)| + \sum_{\sigma\in\subdiv} |\V(\sigma)| \big) \\[2mm]
    &  =  & O(n^{1/4} (|X|+n) ).
\end{array}
\]
\end{proof}
Lemma~\ref{le:number-of-fragments} leads to the following result.
\begin{corollary}
Suppose that any set of $n$ lines has a complete cut set of size $\gamma(n)$.
Then any set $\T$ of $n$ disjoint triangles in $\Reals^3$
can be cut into $O(n^{7/4} + \gamma(3n)\cdot n^{1/4})$ triangular fragments such that the
resulting set of fragments admits a depth order.
\end{corollary}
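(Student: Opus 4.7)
The plan is to apply the two-phase cutting procedure described above directly and then read off the bound from Lemma~\ref{le:number-of-fragments}, which guarantees that the resulting set of fragments $\T_2$ is acyclic and has size $O(n^{7/4}+|X|\cdot n^{1/4})$, where $X=\bigcup_{\sigma\in\subdiv} X(\sigma)$. All that remains is to argue that $|X|$ can be made as small as $\gamma(3n)$ by choosing the per-prism cut sets coherently rather than independently.

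The key observation is that Step~\ref{step2} leaves us free to pick \emph{any} complete cut set $X(\sigma)$ for the segments $\E(\sigma)$ inside the prism $\sigma$. So first I would fix a single global complete cut set $\widehat{X}$ for the full edge set $\E$ of $3n$ triangle edges. Viewing each edge as lying on its supporting line, the hypothesis on $\gamma$ yields such a $\widehat{X}$ with $|\widehat{X}| \leq \gamma(3n)$, since any complete cut set for the $3n$ supporting lines induces one for the edges themselves by restriction. I would then set $X(\sigma) := \widehat{X}\cap\sigma$ for every prism $\sigma\in\subdiv$.

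The step that needs the most care is checking that $X(\sigma)$ is in fact a complete cut set for $\E(\sigma)$. This reduces to the straightforward observation that any depth-order cycle among the sub-edges $\{e\cap\sigma : e\in\E\}$ pulls back to a depth-order cycle among edges of $\E$; since $\widehat{X}$ destroys all such cycles, its restriction to $\sigma$ destroys every cycle that lives entirely inside $\sigma$. Once this is in hand, the bound on $|X|$ is immediate from the fact that the prisms of $\subdiv$ have pairwise disjoint interiors:
\[
|X| \;=\; \sum_{\sigma\in\subdiv} |\widehat{X}\cap\sigma| \;\leq\; |\widehat{X}| \;\leq\; \gamma(3n).
\]
Substituting this into the bound from Lemma~\ref{le:number-of-fragments} gives $|\T_2| = O(n^{7/4}+\gamma(3n)\cdot n^{1/4})$, and the acyclicity of $\T_2$ is already provided by that lemma, completing the proof of the corollary.
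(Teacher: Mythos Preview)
Your proposal is correct and follows essentially the same argument as the paper: fix a global complete cut set for the $3n$ triangle edges (via their supporting lines, so that the hypothesis on $\gamma$ applies), restrict it to each prism, and use the disjointness of the prisms in $\subdiv$ to bound $\sum_\sigma |X(\sigma)|$ by $\gamma(3n)$. The paper phrases this via $\opt$ and $\opt_\sigma$ but the content is identical. One small omission: the corollary promises \emph{triangular} fragments, and the pieces produced by the procedure are merely constant-complexity polygons; you should note, as the paper does, that a final triangulation step does not change the asymptotic count.
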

\begin{proof}
Define $\opt$ to be the minimum size of a complete cut set for $\E$ and,
for a prism~$\sigma\in\subdiv$, define $\opt_\sigma$ to be the minimum size of a complete cut for
$\E(\sigma)$. Then $\sum_{\sigma\in\subdiv} \opt_\sigma \leq \opt$. Indeed, if $\Xopt$
denotes a minimum-size complete cut set for~$\E$, then $\Xopt\cap \sigma$ must eliminate
all cycles from $\E(\sigma)$. Since $\opt \leq \gamma(3n)$, the bound on the number of fragments
generated by our cutting procedure is as claimed.

The procedure above cuts the triangles in $\T$ into constant-complexity polygonal
fragments, which we can obviously cut further into triangular fragments without increasing the
number of fragments asymptotically.
\end{proof}
The results of Aronov and Sharir~\cite{as-atbedc-16} thus imply that any set of
$n$ triangles can be cut into $O(n^{7/4}\polylog n)$ fragments such that the resulting
set of fragments is acyclic. (Aronov and Sharir assume general position, but in the
appendix we show this is not necessary.)

\begin{figure}
  \centering
  \includegraphics{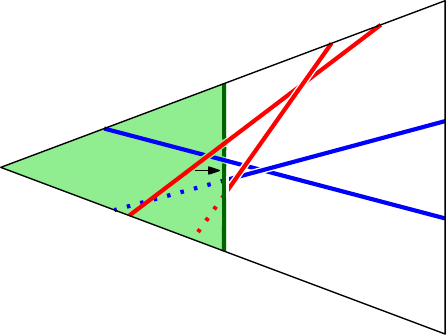}
  \caption{An example showing that one sometimes needs more cuts to eliminate all cycles
            from $\T_1(\sigma')$ than from $\E(\sigma')$. The set $\T_1(\sigma')$ consists
            of the green triangle, and two red and two blue segments.
            (The red and blue segments can easily be replaced by very thin triangles.)
            The set $\E(\sigma')$ consists of the dark green edge of the green triangle,
            and the red and blue segments.
            Note that the configuration shown in the figure is realizable.}
  \label{fi:edges-vs-triangles}
\end{figure}
\begin{remark}
We use a factor $O(n^{1/4})$ more cuts than Aronov and Sharir need for the case of segments.
Observe that we already generate up to
$\Theta(n^{7/4})$ fragments in Step~\ref{step1}, since we take $r=n^{3/4}$.
To reduce the total number of fragments to $O(n^{3/2}\polylog n)$ using our approach,
we would need to set $r:=\sqrt{n}$ in Step~\ref{step1}. In Step~\ref{step2}
we could then only use the set $\V(\sigma)$ to generate the vertical planes in~$H(\sigma)$.
This would lead to vertical prisms that do not have any vertex in their interior,
while only using $O(n^{3/2})$ fragments so far. However, each such
prism~$\sigma'\subseteq \sigma$ can contain
up to $\Theta(\sqrt{n})$ triangle fragments. Hence, we cannot afford to compute
a cut set $X(\sigma')$ for $\E(\sigma')$ and cut each triangle fragment in $\sigma'$
with a vertical plane containing each $q\in X(\sigma')$. One may hope that if we can
eliminate all cycles from $\E(\sigma')$ using $|X(\sigma')|$ cuts, then we can also eliminate
all cycles from $\T_1(\sigma')$ using $|X(\sigma')|$ cuts. Unfortunately this is not the
case, as shown in Fig.~\ref{fi:edges-vs-triangles}.

In the example, there are two cycles in $\T_1(\sigma')$: the green triangle together with the blue segments
and the green triangle with the red segments. The set $\E(\sigma')$
also contains two cycles. The cycles from $\E(\sigma')$ can be eliminated
by cutting the green edge at the point indicated by the arrow. However, a single
cut of the green triangle cannot eliminate both cycles from~$\T_1(\sigma')$.
Indeed, to eliminate the blue-green cycle the cut should separate
(in the projection) the parts of the blue edges projecting onto the green triangle,
while to eliminate the red-green cycle the cut should separate
the parts of the red edges projecting onto the green triangle---but a single
cut cannot do both.
The example can be generalized to sets $\T_1(\sigma')$ of arbitrary
size, so that all cuts in $\E(\sigma')$ can be eliminated by a single cut,
while eliminating cycles from $\T_1(\sigma')$ requires $\Omega(|\T_1(\sigma')|)$ cuts.
Thus a more global reasoning is needed to improve our bound.
\end{remark}

\section{Efficient algorithms to compute complete cut sets}

\paragraph{The algorithm for triangles.}
The hierarchical cutting $\Psi$ can be computed in $O(nr)=O(n^{7/4})$ time~\cite{c-chdc-93,m-rsehc-93},
and it is easy to see that we can compute the set $\T_1$ within the same
time bound. Constructing the 3-dimensional subdivision~$\subdiv$ can trivially be done
in $O(n^{5/2})$ time, by checking for each of the $O(n^{3/2})$ columns and each triangle
$T\in \T$ if $T$ slices the column. Next we need to find the sets $\T_1(\sigma)$
for each prism~$\sigma$ in~$\subdiv$. The computation of the hierarchical cutting also
tells us for each cell $\Delta\in\cut_k$ which projected triangle edges intersect~$\Delta$.
It remains to check, for each triangle $T$ corresponding to such an edge, which
of the $O(n)$ prisms of the column~$C_{\Delta}$ is intersected by~$T$.
Thus, we spend $O(n)$ time for each of the $O(n^{7/4})$ triangle pieces in $\T_1$,
so the total time to compute the sets $\T_1(\sigma)$ is $O(n^{11/4})$.
\medskip

Next we need to compute the cut sets $X(\sigma)$. To this end we use
the algorithm by Aronov~\etal~\cite{abgm-ccrs-08}, which computes a
complete cut set of size $O(\opt_\sigma \cdot \log \opt_\sigma \cdot\log\log \opt_\sigma)$,
where $\opt_\sigma$ is the minimum size of a complete cut set for $\E(\sigma)$.
Thus $|X|$, the total size of all cut sets $X(\sigma)$ we compute, is bounded by
\[
O\left( \sum_{\sigma\in\subdiv} \opt_{\sigma} \cdot \log \opt_{\sigma} \cdot\log\log \opt_{\sigma} \right)
=
O(\opt \cdot \log \opt \cdot\log\log \opt)
=
O(n^{3/2}\polylog n).
\]
Now define $n_\sigma := |T_1(\sigma)|$.
Since the algorithm of Aronov~\etal runs in time $O(m^{4+2\omega}\log^2 m)$
for $m$ segments, the total running time is
\[
O\left( \sum_{\sigma\in\subdiv} n_{\sigma}^{4+2\omega}\log^2 n_\sigma \right).
\]
Since $n_{\sigma}=O(n^{1/4})$ for all $\sigma$
and $\sum_{\sigma\in\subdiv} n_{\sigma} = O(n^{7/4})$, the total time to compute the
sets $X(\sigma)$ is
\[
O\left( \sum_{\sigma\in\subdiv} n_{\sigma}^{4+2\omega}\log^2 n_\sigma \right)
=
O\left( n^{3/2} \cdot (n^{1/4})^{4+2\omega}\log^2 n \right)
=
O(n^{5/2+\omega/2}\log^2 n).
\]
Finally, for each prism~$\sigma$ we cut all triangles in $\T_1(\sigma)$
by the planes in $H(\sigma)$ in a brute-force manner, in total time~$O(n^{7/4}\polylog n)$.

The following theorem summarizes our main result.
\begin{theorem}\label{thm:triangles}
Any set $\T$ of $n$ disjoint non-vertical triangles in $\Reals^3$
can be cut into $O(n^{7/4}\polylog n)$ triangular fragments such that the
resulting set of fragments admits a depth order. The time needed to compute the
cuts is $O(n^{5/2+\omega/2}\log^2 n)$, where $\omega<2.373$ is the exponent
in the running time of the best matrix-multiplication algorithm.
\end{theorem}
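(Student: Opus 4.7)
The plan is to assemble the theorem from the building blocks already in place: the fragment bound comes from Corollary~4 combined with the Aronov--Sharir result, and the running-time bound comes from tracing through the two-step cutting procedure described before Lemma~3 with $r = n^{3/4}$ and tallying the work done in each step.

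For the fragment count, I would simply plug $\gamma(n) = O(n^{3/2}\polylog n)$ (Aronov--Sharir, with the degeneracy-handling extension promised in the appendix) into Corollary~4. This gives
\[
|\T_2| \;=\; O\bigl(n^{7/4} + \gamma(3n)\cdot n^{1/4}\bigr) \;=\; O\bigl(n^{7/4} + n^{3/2}\polylog n \cdot n^{1/4}\bigr) \;=\; O(n^{7/4}\polylog n),
\]
and the fragments are triangular since the procedure cuts only with vertical planes and the triangles in $\T$, after which any constant-complexity polygonal fragment may be triangulated without changing the asymptotic count.

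For the running time, I would walk through the stages of the algorithm. First, computing the efficient hierarchical $(1/r)$-cutting $\Psi$ for the $3n$ projected edge lines takes $O(nr) = O(n^{7/4})$ time, and cutting the projected triangles along $\Psi$ to get $\T_1$ stays within the same bound. Constructing the 3-dimensional subdivision $\subdiv$ by testing each of the $O(n^{3/2})$ columns against each of the $n$ triangles costs $O(n^{5/2})$. Computing the sets $\T_1(\sigma)$ costs $O(n^{11/4})$ by assigning each of the $O(n^{7/4})$ pieces in $\T_1$ to its at most $O(n)$ prisms. Finally the brute-force cutting of the pieces in $\T_1(\sigma)$ by the planes in $H(\sigma)$ fits in $O(n^{7/4}\polylog n)$ time.

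The dominant cost, and the step that needs most care, is the invocation of the Aronov~\etal\ algorithm to compute $X(\sigma)$ inside each prism. That algorithm runs in $O(m^{4+2\omega}\log^2 m)$ time on $m$ segments, and here $m = n_\sigma = |\E(\sigma)| = O(n^{1/4})$. Summing over $\sigma\in\subdiv$ while exploiting $\sum_\sigma n_\sigma = O(n^{7/4})$ yields
\[
\sum_{\sigma\in\subdiv} n_\sigma^{4+2\omega} \log^2 n_\sigma \;\leq\; \bigl(n^{1/4}\bigr)^{3+2\omega} \sum_\sigma n_\sigma \cdot \log^2 n \;=\; O\bigl(n^{5/2+\omega/2}\log^2 n\bigr),
\]
which dominates all other stages and gives the claimed $O(n^{3.69})$ bound. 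The main obstacle is making sure the per-prism segment count $n_\sigma = O(n^{1/4})$---which is exactly what the choice $r = n^{3/4}$ was engineered to give---feeds correctly into the superlinear cost of Aronov~\etal's procedure; once that balance is verified, and the fragment bound is read off from Corollary~4, the theorem follows.
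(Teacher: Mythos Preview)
Your proposal is correct and follows the paper's own argument essentially verbatim: the fragment bound is obtained by plugging the Aronov--Sharir $O(n^{3/2}\polylog n)$ cut-set bound into Corollary~4, and the running time is accounted for stage by stage exactly as in the paper, with the dominant term coming from applying the Aronov~\etal\ algorithm inside each prism and bounding $\sum_\sigma n_\sigma^{4+2\omega}$ using $n_\sigma = O(n^{1/4})$ and $\sum_\sigma n_\sigma = O(n^{7/4})$. Your factorization $\sum_\sigma n_\sigma^{4+2\omega} \le (\max_\sigma n_\sigma)^{3+2\omega}\sum_\sigma n_\sigma$ is equivalent to the paper's intermediate step $O(n^{3/2}\cdot(n^{1/4})^{4+2\omega})$ and yields the same $O(n^{5/2+\omega/2}\log^2 n)$ bound.
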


\paragraph{A fast algorithm for lines.}
The running time in Theorem~\ref{thm:triangles} is better than the running time
obtained by Aronov and Sharir~\cite{as-atbedc-16} to compute a complete cut set
for a set of lines in~$\Reals^3$. The reason is that we apply
the algorithm of Aronov~\etal\cite{abgm-ccrs-08} locally, on a set of segments
whose size is significantly smaller than~$n$. We can use the same idea to speed
up the algorithm to compute a complete cut
set of size $O(n^{3/2}\polylog n)$ for a set $L$ of $n$ lines in $\Reals^3$.
To this end we project $L$ onto the $xy$-plane, and compute a $(1/r)$-cutting $\cut$
for $\proj{L}$ of size~$O(r^2)$, with $r:=\sqrt{n}$.
We then cut each
line $\ell\in L$ at the points where its projection~$\proj{\ell}$ is cut by the cutting
(that is, where $\proj{\ell}$ crosses the boundary of a cell~$\Delta$ in~$\cut$).
Up to this point we make only $O(nr)=O(n^{3/2})$ cuts, which does not
affect the worst-case asymptotic bound on the number of cuts.

Each cell~$\Delta$ of the cutting defines a vertical column $C_{\Delta}$.
Within each column, we apply the algorithm of Aronov~\etal~\cite{abgm-ccrs-08}
to compute a cut set of size
$O(\opt_{{\Delta}} \cdot \log \opt_{{\Delta}} \cdot\log\log \opt_{{\Delta}})$,
where $\opt_{{\Delta}}$ is the size of an optimal cut set inside the column.
In total this gives
$O(\opt \cdot \log \opt \cdot\log\log \opt) = O(n^{3/2}\polylog n)$ cuts
in time $O(n \cdot (n^{1/2})^{4+2\omega}) = O(n^{3+\omega})$.

This leads to the following result.
\begin{theorem}\label{thm:lines}
For any set $L$ of $n$ disjoint lines in $\Reals^3$, we can compute
in $O(n^{3+\omega})$ time a set of $O(n^{3/2}\polylog n)$ cut points on the lines such that the
resulting set of fragments admits a depth order, where $\omega<2.373$ is the exponent
in the running time of the best matrix-multiplication algorithm.
\end{theorem}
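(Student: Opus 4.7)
The plan is to localize the cycle-elimination problem to vertical columns determined by a $(1/r)$-cutting of the projected lines, in exact analogy with the triangle algorithm. First I would project $L$ onto the $xy$-plane and compute an efficient $(1/r)$-cutting $\cut$ for $\proj{L}$ with $r := \sqrt{n}$; this can be done in $O(nr)=O(n^{3/2})$ time and yields $O(r^2)=O(n)$ cells, each of whose interior is crossed by at most $\sqrt{n}$ projected lines. Next I would cut every line $\ell\in L$ at the points where $\proj{\ell}$ crosses an edge of~$\cut$. Since each line contributes at most $O(\sqrt{n})$ such points, this preprocessing introduces only $O(nr)=O(n^{3/2})$ cuts and takes $O(n^{3/2})$ time, so it fits well within the target bounds.

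The key correctness observation is that after the preprocessing, no depth-order cycle among line fragments can cross between two columns $C_{\Delta_1}$ and $C_{\Delta_2}$. Indeed, two objects can only be related by $\prec$ when their projections overlap, and by construction the projections of fragments in distinct cells of $\cut$ lie in disjoint interiors. Hence every cycle is confined to a single column, and it suffices to invoke a complete cut-set subroutine independently in each $C_\Delta$. For this I would apply the algorithm of Aronov~\etal~\cite{abgm-ccrs-08} to the $O(\sqrt{n})$ segments inside each column, producing a local cut set of size $O(\opt_\Delta \cdot \log\opt_\Delta \cdot \log\log\opt_\Delta)$, where $\opt_\Delta$ denotes the minimum size of a complete cut set for the fragments inside $C_\Delta$.

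To bound the total number of cuts, I would combine two facts: (i) the subadditivity $\sum_\Delta \opt_\Delta \leq \opt$, since the globally optimal cut set restricted to $C_\Delta$ must still eliminate all cycles inside $C_\Delta$; and (ii) the upper bound $\opt = O(n^{3/2}\polylog n)$ of Aronov and Sharir~\cite{as-atbedc-16}. Using $\opt_\Delta \leq \opt$ inside the logarithmic factors, the per-column bounds sum to $O(\opt\log\opt\log\log\opt)=O(n^{3/2}\polylog n)$, which dominates the $O(n^{3/2})$ preprocessing cuts. For the running time, each of the $O(n)$ columns contains $O(\sqrt{n})$ segments, and the Aronov~\etal\ algorithm runs in $O(m^{4+2\omega}\log^2 m)$ time on $m$ segments, giving a total of $O(n)\cdot O((\sqrt{n})^{4+2\omega}\log^2 n)=O(n^{3+\omega}\log^2 n)$, which matches the claimed $O(n^{3+\omega})$ bound once the polylogarithmic factor is absorbed.

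I expect the only genuinely new ingredient to be the correctness claim that cycles remain inside a single column; once this is in place the rest is routine bookkeeping. The entire speed-up over the $O(n^{4+2\omega}\log^2 n)=O(n^{8.77})$ bound of Aronov~\etal\ comes from reducing a single global instance of size $n$ to $O(n)$ independent instances of size $O(\sqrt{n})$ before invoking their algorithm, exploiting the fact that its running time is highly superlinear in the input size.
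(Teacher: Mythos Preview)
Your proposal is correct and follows essentially the same approach as the paper: project, take a $(1/\sqrt{n})$-cutting, cut the lines along cell boundaries, and then run the Aronov~\etal\ approximation algorithm independently in each of the $O(n)$ columns of size $O(\sqrt{n})$, summing the local optima via subadditivity against the Aronov--Sharir global bound. The only minor quibble is the per-line claim that each line contributes $O(\sqrt{n})$ crossing points; the paper (and the argument) only needs the aggregate bound $O(nr)=O(n^{3/2})$, which follows directly from the $O(r^2)$ cells each meeting $O(n/r)$ lines.
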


\section{A more refined bound and an extension to surface patches}
Let $\T$ be a set of disjoint surface patches in~$\Reals^3$. We assume each surface
patch is $xy$-monotone, that is, each vertical line intersects a patch in a single
point or not at all, and we assume each surface patch is bounded by a constant number
of bounded-degree algebraic arcs. We refer to the arcs bounding a surface patch as
the \emph{edges} of the surface patch. We assume the edges are in general position
as defined by Aronov and Sharir~\cite{as-atbedc-16}, except that edges of the same
patch may share endpoints.
We will show how to cut the patches from $\T$ into
fragments such that the resulting fragments admit a depth order. The total
number of fragments will depend on~$K$, the number of intersections between the
projections of the edges: for any fixed $\eps>0$, we can tune our procedure
so that it generates $O(n^{1+\eps} + n^{1/4} K^{3/4} \polylog n)$ fragments.
Trivially this implies that the same
intersection-sensitive bound holds for triangles.
\medskip

The extension of our procedure to obtain an intersection-sensitive
bound for surface patches is fairly straightforward.
First we observe that the analog of Proposition~\ref{prop:column} still holds, where
the base of the column can now have curved edges. In fact, the proof holds verbatim,
if we allow the links of the witness cycles $\Gamma(\C)$ that connect points
$a_i$ and $b_i$ on the same surface patch to be curved.
Now, instead of using efficient hierarchical cuttings~\cite{c-chdc-93,m-rsehc-93}
we recursively generate a sequence of cuttings using the intersection-sensitive
cuttings of De~Berg and Schwarzkopf~\cite{bs-ca-95}. (This is somewhat similar
to the way in which Aronov and Sharir~\cite{as-atbedc-16} obtain an intersection-sensitive
bound on the number of cuts needed to eliminate all cycles for a set of line segments in $\Reals^3$.)
Below we give the details.

Let $\E$ denote the set of $O(n)$ edges of the surface patches in~$\T$.
A $(1/r)$-cutting for $\proj{\E}$ is a subdivision of $\Reals^2$ into
trapezoidal cells, such that the interior of each cell intersects at most $n/r$
edges from~$\proj{E}$. Here a trapezoidal cell
is a cell bounded by at most two segments that are parallel to the $y$-axis and at most two pieces of
edges in~$\proj{\E}$ (at most one bounding it from above and at most one bounding it from below).
Set $r := \min(n^{5/4}/K^{1/4},n)$. Let $\rho$ be a sufficiently large constant, and let $k$ be
such that $\rho^{k-1}<r\leq \rho^k$; the exact value of $\rho$ depends on the
desired value of $\eps$ in the final bound. We recursively construct a hierarchy
$\Psi := \cut_0,\cut_1,\ldots,\cut_k$ of cuttings such that $\cut_i$ is a
$(1/\rho^i)$-cutting for~$\proj{\E}$, as follows. The initial cutting~$\cut_0$ is
the entire plane~$\Reals^2$. To construct $\cut_i$ we take each cell $\Delta$
of $\cut_{i-1}$ and we construct a $(1/\rho)$-cutting for the set
$\proj{\E}_\Delta := \{ \proj{e}\cap \Delta : \proj{e}\in\proj{\E}\}$.
De~Berg and Schwarzkopf~\cite{bs-ca-95} have shown that there is such a cutting
consisting of $O(\rho + K_{\Delta}\rho^2/n_\Delta^2)$ cells, where $n_\Delta := |\proj{\E}_\Delta|$
and $K_{\Delta}$ is the number of intersections inside~$\Delta$. One easily
shows by induction on~$i$ that for each cell $\Delta$ in $\cut_{i-1}$ we have\footnote{Strictly
speaking this is not true, as $n$ denotes the number of patches and not the
number of edges. To avoid cluttering the notation we allow ourselves this
slight abuse of notation.}
$n_\Delta \leq n/\rho^{i-1}$. Hence, by combining the cuttings $\cut_\Delta$
over all $\Delta\in\cut_{i-1}$ we obtain a $(1/\rho^i)$-cutting~$\cut_i$.

Let $|\cut_i|$ be the number of cells in $\cut_i$. Then $|\cut_0|=1$ and, for a suitable
constant~$D$ (which depends on the degree of the edges and follows from
the construction of De~Berg and Schwarzkopf~\cite{bs-ca-95}),
we have
\[
\begin{array}{llll}
|\cut_i| & \leq & \sum_{\Delta\in\cut_{i-1}} D (\rho + K_{\Delta}\rho^2/n_\Delta^2) & \\[2mm]
    & \leq & D\rho \cdot |\cut_{i-1}| + D K\rho^{2i}/n^2 & \mbox{(since $\sum_{\Delta\in\cut_{i-1}} K_\Delta\leq K$ and $n_\Delta \leq n/\rho^{i-1}$)} \\[2mm]
        & \leq & D^i \rho^i + \frac{DK}{n^2} \sum_{j=0}^i D^{j} \rho^{2i-j}  & \\[2mm]
    & \leq & D^i \rho^i + \frac{DK}{n^2} \cdot 2\rho^{2i} & \mbox{(assuming $\rho>2D$).}
\end{array}
\]
Now we can proceed exactly as before. Thus we first traverse the hierarchy~$\Psi$
with each patch~$T\in\T$, associating $T$ to nodes $u$ such that $\Delta_u\subseteq\proj{T}$
and $\Delta_{\pa(u)}\not\subseteq\proj{T}$, and to
the leaves that we reach. This partitions $T$ into a number of fragments.
The resulting set~$\T_1$ of fragments generated over all triangles $T\in\T$ has total size

\begin{eqnarray}
|\T_1|  & = & O\left( \textstyle{\sum}_{i=0}^{k-1} \textstyle{\sum}_{\Delta\in\cut_i} n/\rho^i  \right) \label{eq:T1} \\
        & = & O\left( \textstyle{\sum}_{i=0}^{k-1} \left( D^i \rho^i + \frac{DK}{n^2} \rho^{2i} \right) \cdot (n/\rho^i)  \right) \nonumber \\
        & = & O\left( n \textstyle{\sum}_{i=0}^{k-1}D^i + \frac{DK}{n} \textstyle{\sum}_{i=0}^{k-1} \rho^{i} \right) \nonumber \\
        & = & O\left( n D^k + DKr/n \right) \nonumber
\end{eqnarray}

If we now set $\rho := D^{1/\eps}$ then $D^k =\rho^{k\eps} = O(r^{\eps})=O(n^{\eps})$, and so
$|\T_1|=O(n^{1+\eps} + Kr/n)$. We then decompose $\Reals^3$ into a subdivision~$\subdiv$ 
consisting of prisms~$\sigma$
that each intersect at most $n/r$ surface patches, take a minimum-size complete cut set~$X(\sigma)$
for the edges inside each prism~$\sigma$, and generate a set $H(S)$ of cutting planes
through the points in $X(\sigma)\cup \V(\sigma)$. (Here $\V(\sigma)$ is, as before,
the set of vertices of the surface patches in the interior of~$\sigma$.)
Since Aronov and Sharir~\cite{as-atbedc-16} proved that any set of $n$ bounded-degree
algebraic arcs in general position\footnote{We assumed the edges are in general
position, but edges of the same patch may share endpoints. However, there
are no endpoints in the interior of~$\sigma$, and so the only degeneracy that can
happen is if two edges of the same patch share an endpoint that lies on the
boundary of~$\sigma$. In this case we can slightly shorten and perturb the
edges to remove this degeneracy as well.}
admits a complete cut set of size $O(n + (nK)^{1/2}\polylog n)$---the
constant of proportionality and the exponent of the polylogarithmic factor depend
on the degree of the arcs---we have
\[
\sum_{\sigma\in\subdiv} |X(\sigma)| = O\left( n + (nK)^{1/2}\polylog n \right)
\]
and so the number of additional fragments created in Step~\ref{step2} is bounded by
\[
O(n/r) \cdot \left( \sum_{\sigma\in\subdiv} (|\V(\sigma)|+ |X(\sigma)|) \right)
=
O\left( n^2/r + (n^{3/2}K^{1/2}/r)\polylog n \right).
\]
By picking $r := \min(n^{5/4}/K^{1/4},n)$ our final bound on the number of fragments becomes
$O(n^{1+\eps} + n^{1/4} K^{3/4}\polylog n)$.

\begin{theorem}\label{thm:patches}
Let $\T$ be a set of $n$ disjoint $xy$-monotone surface patches in $\Reals^3$, each
bounded by a constant number of constant-degree algebraic arcs in general position.
Then for any fixed $\eps>0$ we can cut $\T$ into $O(n^{1+\eps} + n^{1/4} K^{3/4}\polylog n)$ fragments
that admit a depth order, where $K$ is the number
of intersections between the projections of the edges of the surface patches in~$\T$.
The constant of proportionality and the exponent of the polylogarithmic factor depend
on the degree of the edges. The expected time needed to compute the cuts is 
$O\left( n^{1+\eps} + K^{(3+\omega)/2+\eps} / n^{(1+\omega)/2}  \right)$, 
where $\omega<2.373$ is the exponent
in the running time of the best matrix-multiplication algorithm.
\end{theorem}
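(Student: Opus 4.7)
The combinatorial bound $O(n^{1+\eps}+n^{1/4}K^{3/4}\polylog n)$ on $|\T_2|$ has already been derived in the discussion preceding the statement: Step~1 creates a set $\T_1$ of $O(n^{1+\eps}+Kr/n)$ polygonal pieces, and Step~2 adds $O(n^2/r + (n^{3/2}K^{1/2}/r)\polylog n)$ further fragments once each cut set $X(\sigma)$ is chosen according to the Aronov--Sharir bound for algebraic arcs; plugging in $r:=\min(n^{5/4}/K^{1/4},n)$ and using $\sum_\sigma K_\sigma\leq K$ gives the claimed fragment count. So my plan focuses on the expected running time, which splits into four phases: (i)~build the hierarchy $\Psi$ by recursive application of the intersection-sensitive cuttings of De~Berg and Schwarzkopf; (ii)~traverse $\Psi$ with each patch to produce $\T_1$; (iii)~erect the vertical walls of $\cut_k$ and slice each column with the patches completely crossing it to obtain $\subdiv$ and the incidence lists $\T_1(\sigma)$; and (iv)~for each prism $\sigma$, compute a complete cut set $X(\sigma)$ with the algorithm of Aronov~\etal~\cite{abgm-ccrs-08} and cut every piece of $\T_1(\sigma)$ by the planes through $X(\sigma)\cup\V(\sigma)$. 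Phases~(i)--(iii) run in expected time polynomial in $|\T_1|$, which will turn out to be dominated by phase~(iv).

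Phase~(iv) is the real obstacle, because Aronov~\etal costs $O(m^{4+2\omega}\log^2 m)$ on $m$ arcs and we cannot afford to charge $(n/r)^{4+2\omega}$ to \emph{every} prism. The technical heart of the argument is the inequality
\[
\sum_{\sigma\in\subdiv} n_\sigma \;=\; O(|\T_1|),\qquad \text{where } n_\sigma := |\T_1(\sigma)|.
\]
I would prove it using the disjointness of the surface patches in $\Reals^3$: since a patch $T$ does not meet any patch $T'$ that completely crosses a column $C_\Delta$ containing part of $T$, every edge $e$ of $T$ inside $C_\Delta$ lies on a single side of each such $T'$ and therefore traverses exactly one prism of $C_\Delta$. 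Hence every piece $P\in\T_1$ appears in $\T_1(\sigma)$ for only $O(1)$ prisms~$\sigma$ (at most one per boundary edge of $P$ coming from an edge in $\E$), from which the inequality is immediate. The curvature of the patches plays no role in this step.

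Combining the inequality with the trivial estimate $n_\sigma\leq n/r$ and the already-derived bound on $|\T_1|$ gives
\[
\sum_{\sigma} n_\sigma^{4+2\omega}\log^2 n_\sigma
\;\leq\; (n/r)^{3+2\omega}\log^2 n \,\cdot\, \sum_\sigma n_\sigma
\;=\; O\!\left((n/r)^{3+2\omega}\,(n^{1+\eps}+Kr/n)\,\log^2 n\right).
\]
Substituting $r=\min(n^{5/4}/K^{1/4},n)$ and absorbing both $\log^2 n$ and the $D^k=O(n^\eps)$ slack of the hierarchy into a single $n^\eps$ factor yields $O\bigl(n^{1+\eps}+K^{(3+\omega)/2+\eps}/n^{(1+\omega)/2}\bigr)$, as claimed by the theorem; as a sanity check, plugging in $K=\Theta(n^2)$ recovers the $O(n^{5/2+\omega/2})$ bound of Theorem~\ref{thm:triangles}.
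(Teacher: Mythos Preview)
Your proposal is correct and follows essentially the same route as the paper: bound $\sum_\sigma n_\sigma^{4+2\omega}\log^2 n_\sigma$ by $(n/r)^{3+2\omega}\sum_\sigma n_\sigma$, use $\sum_\sigma n_\sigma=O(|\T_1|)=O(n^{1+\eps}+Kr/n)$, and substitute $r=\min(n^{5/4}/K^{1/4},n)$. In fact you supply the disjointness argument for $\sum_\sigma n_\sigma=O(|\T_1|)$ (each edge of a piece lies in a single prism of its column because it cannot cross any slicing patch) that the paper merely asserts; your treatment of phases~(i)--(iii) is a bit more hand-wavy than the paper's explicit computation of the hierarchy-construction time, but since those phases are indeed dominated by phase~(iv) this is not a gap.
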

\begin{proof}
The bound on the number of fragments follows from the discussion above. To prove the time
bound we first note that an intersection-sensitive cutting of size 
$O(\rho + K_{\Delta}\rho^2/n_\Delta^2)$ can be computed in expected time
$O(n_\Delta \log \rho + K_\Delta \rho/n_\Delta)$~\cite{bs-ca-95}.
Hence, constructing the hierarchy takes expected time
\[
\begin{array}{lll}
O\left( \sum_{i=0}^{k-1} \sum_{\Delta\in\cut_i} \left( \frac{n}{\rho^i}\log \rho  + K_{\Delta} \frac{\rho}{ (n/\rho^i)} \right) \right)
        & = & 
O\left( \sum_{i=0}^{k-1} \sum_{\Delta\in\cut_i} \frac{n}{\rho^i}\log \rho \right) + 
O\left( \sum_{i=0}^{k-1} K \frac{\rho^{i-1}}{n} \right).
\end{array}
\]
The first term is the same as in Equation~(\ref{eq:T1})
except for the extra $\log\rho$-factor (which is a constant), so this term is still bounded by 
$O(n^{1+\eps} + Kr/n)$. Since $\rho^{k-1}\leq r$, the second term is bounded by $O(Kr/n)$,
which is dominated by the first term. Thus the total expected time to compute the set $\T_1$ is
$O(n^{1+\eps} + Kr/n)$.

In the second stage of the algorithm we use the algorithm of Aronov~\etal~\cite{abgm-ccrs-08}
on the set $\E(\sigma)$ of edge fragments inside each prism~$\sigma\in \subdiv$. Aronov~\etal
only explicitly state their result for line segments, but it is easily checked that
it works for curves as well; the fact that, for example, there can already be cyclic
overlap between a pair of curves has no influence on the algorithm's approximation
factor or running time. (The crucial property still holds that cut
points can be ordered linearly along a curve, and this is sufficient for the algorithm to work.)
Thus the time needed to compute all cut sets $X(\sigma)$ is
\[
O\left( \sum_{\sigma\in\subdiv} n_{\sigma}^{4+2\omega}\log^2 n_\sigma \right),
\]
where $n_\sigma$ is the number of edges inside~$\sigma$.
Since $n_\sigma = O(n/r)$ and $\sum_{\sigma\in\subdiv} n_{\sigma} = O(|\T_1|) = O(n^{1+\eps} + Kr/n)$,
computing the cut sets takes 
\[
O\left( \frac{n^{4+2\omega+\eps}}{r^{3+2\omega}} + K \left( \frac{n}{r} \right)^{2+2\omega+\eps} \right)
\] 
time (for a slightly larger $\eps$ than before).
Because we picked $r := \min(n^{5/4}/K^{1/4},n)$, the time to compute the cut sets is
\[
O\left( n^{1+\eps} + n^{\frac{1}{4}-\omega/2+\eps}\cdot K^{\frac{3}{4}+\omega/2} + \frac{K^{\frac{3}{2}+\omega/2}+\eps}{n^{\frac{1}{2}+\omega/2}}  \right)
\ \ = \ \
O\left( n^{1+\eps} + K^{\frac{3}{2}+\omega/2+\eps} / n^{\frac{1}{2}+\omega/2}  \right),
\] 
which dominates the time for the first stage.
Finally, cutting the patches inside each prism~$\sigma$ in a brute-force manner
takes time linear in the maximum number of fragments we generate, namely
$O(n^{1+\eps} + n^{1/4} K^{3/4}\polylog n)$. Thus the total expected time
is 
\[
O\left( n^{1+\eps} + K^{\frac{3}{2}+\omega/2+\eps} / n^{\frac{1}{2}+\omega/2}  \right).
\]
Observe that for $K=n^2$ the bound we get essentially the same bound as in
Theorem~\ref{thm:triangles}.
\end{proof}

\section{Dealing with degeneracies}
We first show how to deal with degeneracies when eliminating cycles from a set of segments
(or lines) in~$\Reals^3$, and then we argue that the method for triangles presented in
the main text does not need any non-degeneracy assumptions either. (We do not deal with
removing the non-degeneracy assumptions for the case of surface patches.)

\paragraph{Degeneracies among segments.}
Let $S=\{s_1,\ldots,s_n\}$ be a set of disjoint segments in~$\Reals^3$.
(Even though we allow degeneracies, we do not allow the segments in $S$ to intersect or touch,
since then the problem is not well-defined. If the segments
are defined to be relatively open, then we can also allow an
endpoint of one segment to coincide with an endpoint of, or lie in the interior of, another segment.)
We can assume without loss of generality that $S$ does not contain vertical segments,
since eliminating all cycles from the non-vertical segments in $S$ also eliminates all
cycles when we include the vertical segments.
Aronov and Sharir~\cite{as-atbedc-16} make the following non-degeneracy assumptions:
\myenumerate{
\item[(i)] no endpoint of one segment projects onto any other segment;
\item[(ii)] no three segments are concurrent (that is, pass through a common point) in the projection;
\item[(iii)] no two segments in $S$ are parallel.
}
The main difficulty arises from type~(iii) degeneracies where
parallel segments overlap in the projection. The problem is that a small
perturbation will reduce the intersection in the projection to a single point,
and cutting one of the segments at the intersection is effective for the perturbed
segments but not necessarily for the original segments. Next we describe how we handle this
and how to deal with the other degeneracies as well.
\medskip

First we slightly extend each segment in $S$---segments that are relatively open
 would be slightly shortened---to get rid of degeneracies of type~(i),
and we slightly translate each segment to make sure no two segments
intersect in more than a single point in the projection. (The translations are not necessary, but they
simplify the following description and bring out more clearly how
the $\prec$-relations between parallel segments are treated.)
Next, we slightly perturb each segment such that all degeneracies disappear and
any two non-parallel segments whose projections intersect
before the perturbation still do so after the perturbation. This gets rid of
degeneracies of types~(ii) and~(iii). Let $s'_i$ denote the segment $s_i$
after the perturbation, and define $S' := \{s'_1,\ldots,s'_n\}$. The set $S'$
has the following properties:
\myitemize{
\item for any two non-parallel segments $s_i,s_j\in S$ we have $s_i\prec s_j$
    if and only if $s'_i\prec s'_j$;
\item the order of intersections along segments in the projection is preserved in the following sense:
    if $\proj{s'_i}\cap\proj{s'_j}$ lies before $\proj{s'_i}\cap\proj{s'_k}$ along $\proj{s'_i}$
    as seen from a given endpoint of $\proj{s'_i}$, then $\proj{s_i}\cap\proj{s_j}$ does
    not lie behind $\proj{s_i}\cap\proj{s_k}$ along $\proj{s_i}$ as seen from the
    corresponding endpoint of~$\proj{s_i}$;
\item if $s_i$ and $s_j$ are parallel then $\proj{s'_i}$ and $\proj{s'_j}$ do not intersect.
}
We will show how to obtain
a complete cut set for $S$ from a complete cut set $X'$ for $S'$.
The cut set for $S$ will consist of a cut set $X$ that is derived from $X'$ plus
a set $Y$ of $O(n\log n)$ additional cuts, as explained next.

\begin{itemize}
\item Let $q'\in X'$ be a cut point on a segment $s'_i\in S'$.
    Let $s'_j\in S'$ be the segment such that $\proj{s'_i}\cap\proj{s'_j}$
    is the intersection point on $\proj{s'_i}$ closest to $\proj{q'}$,
    with ties broken arbitrarily. (We can assume that $s'_j$ exists, since
    if $\proj{s'_i}$ does not intersect any projected segment then
    the cut point~$q$ is useless and can be ignored.) Now we put into $X$ the
    point $q\in s_i$ such that $\proj{q}=\proj{s_i}\cap\proj{s_j}$.
    (It can happen that several cut points along $s'_i$ generate the same
    cut point along~$s_i$. Obviously we need to insert only one of them into~$X$.)
    The crucial property of the cut point $q\in X$ generated for $q'\in X'$ is the following:
    \myitemize{
    \item if $\proj{q'}$ coincides with a certain intersection
        along $\proj{s'_i}$ then $\proj{q}$ coincides with the corresponding intersection
        along $\proj{s_i}$;
    \item if $\proj{q'}$ separates two intersections along
        $\proj{s'_i}$ then $\proj{q}$ separates the corresponding intersections along~$\proj{s_i}$
        or it coincides with at least one of them.
    }
    By treating all cut points in $X'$ in this manner, we obtain the set~$X$.
\item The set~$Y$ deals with parallel segments in $S$ whose projections overlap.
    It is defined as follows. Let $S(X)$ be the set of fragments resulting from
    cutting the segments in $S$ at the cut points in~$X$. Partition $S(X)$ into
    subsets~$S_{\ell}(X)$ such that $S_\ell(X)$ contains all fragments from $S(X)$
    projecting onto the same line~$\ell$. Consider such a subset $S_{\ell}(X)$
    and assume without loss of generality that $\ell$ is the $x$-axis.
    Construct a segment tree~\cite{bcko-cgaa-08} for the projections of the fragments in~$S_\ell(X)$.
    Each projected fragment $\proj{f}$ is stored at $O(\log |S_\ell(X)|)=O(\log n)$
    nodes of the segment tree,
    which induces a subdivision of $\proj{f}$ into $O(\log n)$ intervals.
    We put into $Y$ the $O(\log n)$ points on $f$ whose projections define these intervals.
    The crucial property of segment trees that we will need is the following:
    \myitemize{
    \item Let $I_v$ denote the interval corresponding to a node~$v$. Then for any two
        nodes~$v,w$ we either have $I_v\subseteq I_w$ (when $v$ is a descendent of $w$),
        or we have $I_v\supseteq I_w$ (when $v$ is an ancestor of $w$), or otherwise
        the interiors of $I_v$ and $I_w$ are disjoint. Hence, a similar property holds
        for the projections of the sub-fragments resulting from cutting the fragments
        in $S_\ell(X)$ as explained above.
    }

    Doing this for all fragments $s_i\in S_\ell(X)$ and for all subsets $S_\ell(X)$ gives us
    the extra cut set~$Y$.
\end{itemize}
\begin{lemma}\label{le:remove-degen}
The set $X\cup Y$ is a complete cut set for $S$.
\end{lemma}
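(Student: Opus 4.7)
The plan is to assume for contradiction that $S(X\cup Y)$ contains a cycle, consider a \emph{minimal} such cycle $\C:f_1\prec f_2\prec\cdots\prec f_k\prec f_1$, and derive a cycle in $S'(X')$, contradicting that $X'$ is a complete cut set for~$S'$. Before lifting, I first argue that, in any minimal cycle, no two consecutive underlying segments $s_i,s_{i+1}$ can be parallel with overlapping projections; the remaining non-parallel case will then lift cleanly.

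This first reduction uses the property of $Y$ inherited from the segment tree: for any two fragments in $S(X\cup Y)$ whose projections lie on the same line~$\ell$, the projections are either interior-disjoint or nested. Thus if $s_i$ and $s_{i+1}$ project to the same line then $\proj{f_i}$ and $\proj{f_{i+1}}$ are nested; say $\proj{f_i}\subseteq\proj{f_{i+1}}$. Because parallel sub-segments maintain a constant vertical offset, the relation $f_i\prec f_{i+1}$ in fact holds at every vertical line above $\proj{f_i}$, and in particular at the line witnessing $f_{i-1}\prec f_i$. This gives $f_{i-1}\prec f_{i+1}$, so $f_i$ may be removed from~$\C$, contradicting minimality. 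The sub-case $\proj{f_{i+1}}\subseteq\proj{f_i}$ is symmetric, while parallel $s_i,s_{i+1}$ whose projections are distinct (hence disjoint) lines cannot contribute a $\prec$ relation at all.

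For the lifting step I take a minimal cycle $\C$ in which every consecutive pair $(s_{i-1},s_i)$ is non-parallel. Each relation $f_{i-1}\prec f_i$ is then witnessed at the unique projected crossing point $p_{i-1,i}=\proj{s_{i-1}}\cap\proj{s_i}$, which must lie in the interior of $\proj{f_i}$ (the elimination of type~(i) degeneracies rules out that $p_{i-1,i}$ is an endpoint of~$s_i$). Hence $p_{i-1,i}$ and $p_{i,i+1}$ both lie strictly inside $\proj{f_i}$, which means no cut in $X$ on $s_i$ separates them and no cut in $X$ on $s_i$ coincides with either point. Applying the contrapositives of the two bullet-point properties relating $X$ to~$X'$, no cut of $X'$ on $s'_i$ coincides with $p'_{i-1,i}$ or $p'_{i,i+1}$ and no cut of $X'$ lies strictly between them, so a single fragment $f'_i\in S'(X')$ of $s'_i$ contains both. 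The preservation of $\prec$-relations for non-parallel pairs (the first listed property of~$S'$) then transfers $f_{i-1}\prec f_i$ above $p_{i-1,i}$ to $f'_{i-1}\prec f'_i$ above $p'_{i-1,i}$, giving the desired cycle in $S'(X')$.

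The main obstacle I expect is the careful bookkeeping in the lifting: cut points in $X$ are shifted versions of those in $X'$ and may coincide with intersection points, so both bullet-point properties and their contrapositives must be deployed precisely to ensure $p'_{i-1,i}$ and $p'_{i,i+1}$ end up in a common $X'$-fragment. A secondary point worth explicit verification is that minimal cycles have length at least~$3$: two non-parallel projected segments meet in at most a single point, and parallel overlapping segments have constant depth offset, so neither configuration can sustain a $2$-cycle. This guarantees that the shortening step in the parallel case always strictly reduces cycle length and so genuinely contradicts minimality.
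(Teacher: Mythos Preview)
Your proposal is correct and follows essentially the same approach as the paper: take a minimal cycle, use the segment-tree property of~$Y$ to rule out consecutive parallel fragments, then lift the cycle to~$S'$ and contradict completeness of~$X'$. The only cosmetic difference is direction: the paper argues forward (some $q'\in X'$ must hit the lifted witness curve~$\Gamma'$, and the corresponding $q\in X$ then hits~$\Gamma(\C)$), whereas you argue by contrapositive (no $q\in X$ hits the link on~$s_i$, hence no $q'\in X'$ separates the lifted crossing points, hence the $f'_i$ form a cycle); these are equivalent, and your version is arguably more explicit about why the~$f'_i$ really constitute a cycle in~$S'(X')$.

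One small remark on your justification that $p_{i-1,i}$ lies strictly inside $\proj{f_i}$: you cite only the elimination of type~(i) degeneracies, which rules out $p_{i-1,i}$ being an \emph{original} endpoint of~$s_i$, but you should also note that $p_{i-1,i}$ cannot be a cut endpoint of~$f_i$ either, since fragments are open at cut points and then the vertical line at $p_{i-1,i}$ would miss~$f_i$, contradicting $f_{i-1}\prec f_i$. With this clarified, your contrapositive step goes through cleanly.
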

\begin{proof}
Let $F$ denote the set of fragments resulting from cutting the segments in~$S$
at the points in~$X\cup Y$, and suppose for a contradiction that $F$ still contains a cycle.
Let $\C:=f_0\prec f_1\prec\cdots\prec f_{k-1}\prec f_0$ be a minimal cycle in $F$,
and let $s_i\in S$ be the segment containing~$f_i$.

As explained above, the cut points in $Y$ guarantee that for any two
parallel fragments in~$F$ whose projections overlap, one is contained in the other in the projection.
This implies that two consecutive fragments $f_i,f_{i+1}$ in $\C$ cannot be parallel: if they were,
then $\proj{f_i}\subseteq \proj{f_{i+1}}$ (or vice versa) which contradicts that
$\C$ is minimal. Hence, any two consecutive fragments are non-parallel. Now consider
the witness curve $\Gamma(\C)$ for $\C$. Since consecutive fragments in $\C$ are
non-parallel, $\Gamma(\C)$ is unique. Let $\Gamma'$ be the corresponding curve
for $S'$, that is, $\Gamma'$ visits the segments $s'_0,s'_1,\ldots,s'_{k-1},s'_0$
from $S'$ in the given order---recall that $f_i\subseteq s_i$ and that $s'_i$ is
the perturbed segment $s_i$---and it steps from $s'_i$ to $s'_{i+1}$ using
vertical connections. Since $X'$ is a complete cut set for $S'$, there must be a
link of $\Gamma'$, say on segment $s'_i$, that contains a cut point~$q'\in X'$.
In other words, $\proj{q'}$ separates $\proj{s'_{i-1}}\cap \proj{s'_{i}}$ from
$\proj{s'_{i}}\cap \proj{s'_{i+1}}$, or it coincides with one of these points.
But then the cut point~$q\in X$ corresponding to $q'$ must separate
$\proj{s_{i-1}}\cap \proj{s_{i}}$ from $\proj{s_{i}}\cap \proj{s_{i+1}}$
or coincide with one of these points, thus cutting the witness curve $\Gamma(\C)$---a contradiction.
\end{proof}

\begin{theorem}\label{th:degenerate}
Suppose any non-degenerate set of $n$ disjoint segments can be cut into $\gamma(n)$ fragments
in $T(n)$ time such that the resulting set of fragments admits a depth order.
Then any set of $n$ disjoint segments can be cut into $O(\gamma(n)\log n)$ fragments
in $T(n)+O(n^2)$ time such that the resulting set of fragments admits a depth order.
\end{theorem}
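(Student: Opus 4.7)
The plan is to combine the algorithm for non-degenerate inputs with the cut-set construction already described, and then bound the size and running time carefully. First, I would spend $O(n)$ time slightly extending segments so that no endpoint projects onto another segment, and slightly translating them so that any two projections meet in at most a single point; this handles the type~(i) degeneracies and simplifies the subsequent perturbation. Next, I would perturb to obtain the non-degenerate set $S'$ as described just above the theorem, making sure to record, for each segment $s_i$, the correspondence $s_i \leftrightarrow s'_i$. Applying the hypothesized algorithm to $S'$ yields a complete cut set $X'$ of size $\gamma(n)$ in time $T(n)$. From $X'$ I would derive $X$ exactly as in the bullet above Lemma~\ref{le:remove-degen}, and then build the extra set $Y$ by grouping the fragments of $S(X)$ into classes $S_\ell(X)$ by projection line and constructing one segment tree per class.

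For correctness, the heavy lifting is already done by Lemma~\ref{le:remove-degen}, which shows that $X \cup Y$ is a complete cut set for $S$. For the size bound, clearly $|X| \le |X'| \le \gamma(n)$, so the number of fragments in $S(X)$ is at most $n + |X| = O(\gamma(n))$ (using $\gamma(n) \ge n$). Each such fragment is stored at $O(\log n)$ nodes of its segment tree and thereby induces $O(\log n)$ points of $Y$, giving $|Y| = O(\gamma(n)\log n)$. Hence the total number of fragments after cutting with $X \cup Y$ is $O(\gamma(n)\log n)$, as claimed.

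For the running time, I would argue as follows. Perturbation is $O(n)$, and running the non-degenerate algorithm on $S'$ takes $T(n)$. To construct $X$ from $X'$ I need, for each cut point $q' \in X'$ on $s'_i$, the intersection on $\proj{s'_i}$ closest to $\proj{q'}$; computing and sorting all $O(n^2)$ projected intersections takes $O(n^2)$ time, after which each $q' \in X'$ is handled in $O(\log n)$ time by binary search along its host segment. To build $Y$, I would group the fragments in $S(X)$ by the line they project onto (an $O(n \log n)$ bucketing based on line coefficients), then for each class build a one-dimensional segment tree on $O(|S_\ell(X)|)$ intervals in $O(|S_\ell(X)| \log |S_\ell(X)|)$ time and read off the endpoints of the canonical subintervals; summed over classes this is $O(\gamma(n) \log n) = O(n^2)$ for any reasonable $\gamma$ (certainly for $\gamma(n)=O(n^{3/2}\polylog n)$, which is the regime we care about). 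Combined, the total running time is $T(n) + O(n^2)$.

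The only place where there is a genuine risk of trouble is the step that produces $X$ from $X'$: I must be sure that the correspondence between an intersection along $\proj{s'_i}$ and the corresponding intersection along $\proj{s_i}$ is well-defined and order-preserving in the sense listed in the three bullet properties of $S'$, so that the two critical properties of the generated cut points $q \in X$ (coinciding with, or separating, the relevant intersections along $\proj{s_i}$) actually hold. This is precisely what Lemma~\ref{le:remove-degen} relies on, and it follows because the perturbation was chosen to preserve both the $\prec$-relation between non-parallel segments and the order of intersections along each segment's projection; I would make this verification explicit as part of the proof.
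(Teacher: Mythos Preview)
Your approach is essentially the same as the paper's: derive $X$ from a complete cut set $X'$ for the perturbed non-degenerate set $S'$, add the extra cuts $Y$ via segment trees, invoke Lemma~\ref{le:remove-degen} for correctness, and count. One correction on the running-time analysis: the perturbation step cannot be done in $O(n)$ time. To guarantee that $S'$ satisfies the three listed properties---in particular that non-parallel intersecting projections remain intersecting and that the order of intersections along each projected segment is preserved---the magnitude of the perturbation must be smaller than the minimum separation between combinatorial features of the projected arrangement; determining a safe perturbation therefore requires computing the arrangement of $\proj{S}$, which takes $O(n^2)$ time. This is precisely where the paper locates the $O(n^2)$ overhead. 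Your later $O(n^2)$ step (computing and sorting all projected intersections to map each $q'\in X'$ to $q\in X$) is subsumed by this arrangement computation, so the final bound $T(n)+O(n^2)$ is unaffected.
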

\begin{proof}
The bound on the number of fragments immediately follows from the discussion above.
The overhead term in the running time is caused by the computation of the perturbed set~$S'$, which
can be done in $O(n^2)$ time if we compute the full arrangement in the projection.
\end{proof}

\paragraph{Degeneracies among triangles.}
Recall that the cuts we make on the triangles are induced by vertical planes, and that
a triangle becomes open where it is cut. When a triangle is completely contained
in the cutting plane, however, it is not well defined what happens. One option is to say
that the triangle completely disappears; another option it to
say that the triangle is not cut at all. Since vertical triangles can be ignored
in the Painter's Algorithm, we will simply assume that no triangle in $\T$ is vertical.
However, we can still have other degeneracies, such as edges of different
triangles being parallel or triples of projected edges being concurrent.
Fortunately, the fact that we do not need non-degeneracy assumptions for segments immediately implies
that we can handle such cases. Indeed, degeneracies are
not a problem for the hierarchical cuttings we use in Step~\ref{step1} of our procedure,
and in Step~\ref{step2} we only assumed non-degeneracy when computing the cut set $X(\sigma)$ for
the edge set~$\E(\sigma)$---and Theorem~\ref{th:degenerate} implies we can get rid of
the non-degeneracy assumptions in this step. Note that the $O(n^2)$ overhead term in
Theorem~\ref{th:degenerate} is subsumed by the time needed to apply the algorithm of Aronov~\etal~\cite{abgm-ccrs-08}.

%
%

\section{Concluding remarks}
\label{se:concl}
We proved that any set of $n$ disjoint triangles in $\Reals^3$ can be cut into
$O(n^{7/4}\polylog n)$ triangular fragments that admit a depth order, thus providing
the first subquadratic bound for this important setting of the problem.  We also proved
a refined bound that depends on the number of intersections of the triangle edges
in the projection, and generalized the result to $xy$-monotone surface patches.
The main open problem is to tighten the gap between our bound and the $\Omega(n^{3/2})$
lower bound on the worst-case number of fragments needed: is it possible to cut
any set of triangles into roughly $\Omega(n^{3/2})$ triangular fragments that admit a depth
order, or is this only possible by using curved cuts? One would expect the former,
but curved cuts seem unavoidable in the approach of Aronov, Miller and Sharir~\cite{ams-adcat-17}
and it seems very hard to push our approach to obtain any $o(n^{7/4})$ bound.

\bibliographystyle{plain}

\begin{thebibliography}{99}
\newcommand{\cgta}{\emph{Comput.\ Geom.\ Theory Appl.}\xspace}
\newcommand{\dcg}{\emph{Discr.\ Comput.\ Geom.}\xspace}
\newcommand{\ijcga}{\emph{Int.\ J.\ Comput.\ Geom.\ Appl.}\xspace}
\newcommand{\alg}{\emph{Algorithmica}\xspace}
\newcommand{\sicomp}{\emph{SIAM J.\ Comput.}\xspace}
\newcommand{\jalg}{\emph{J.\ Alg.}\xspace}
\newcommand{\jda}{\emph{J.\ Discr.\ Alg.}\xspace}
\newcommand{\ipl}{\emph{Inf.\ Proc.\ Lett.}\xspace}
\newcommand{\jacm}{\emph{J.~ACM}\xspace}

\newcommand{\socg}[1]{In \emph{Proc.\ #1 Int. Symp.\ Comput.\ Geom.\ (SoCG)}\xspace}
\newcommand{\soda}[1]{In \emph{Proc.\ #1 ACM-SIAM Symp.\ Discr.\ Alg.\ (SODA)}\xspace}
\newcommand{\stoc}[1]{In \emph{Proc.\ #1 ACM Symp.\ Theory Comp.\ (STOC)}\xspace}
\newcommand{\focs}[1]{In \emph{Proc.\ #1 IEEE Symp.\ Found. Comput. Sci.\ (FOCS)}\xspace}
\newcommand{\esa}[1]{In \emph{Proc.\ #1 Europ.\ Symp.\ Alg (ESA)}\xspace}
\newcommand{\swat}[1]{In \emph{Proc.\ #1 Scand.\ Workshop Alg.\ Theory (SWAT)}\xspace}
\newcommand{\isaac}[1]{In \emph{Proc.\ #1 Int.\ Symp.\ Alg.\ Comput. (ISAAC)}\xspace}

\bibitem{agmv-bspfr-00}
P.K.~Agarwal, E.F.~Grove, T.M.~Murali, and J.S.~Vitter.
Binary space partitions for fat rectangles.
\sicomp 29:1422--1448 (2000).

\bibitem{aks-rsisf-95}
P.K.~Agarwal, M.J.~Katz, and M.~Sharir.
Computing depth orders for fat objects and related problems.
\cgta 5:187--206 (1995).

\bibitem{abgm-ccrs-08}
B.~Aronov, M.~de Berg, C.~Gray, and E.~Mumford.
Cutting cycles of rods in space: Hardness results and approximation algorithms.
\soda{19th}, pages 1241--1248, 2008.

\bibitem{as-atbedc-16}
B.~Aronov and M.~Sharir.
Almost tight bounds for eliminating depth cycles in three dimensions.
\stoc{48th}, pages 1--8, 2016.

\bibitem{aks-ctcls-05}
B.~Aronov, V.~Koltun, and M.~Sharir.
Cutting triangular cycles of lines in space.
\dcg 33:231--247 (2005).

\bibitem{ams-adcat-17}
B.~Aronov, E.Y.~Miller, and M.~Sharir.
\emph{Eliminating depth cycles among triangles in three dimensions.}
\soda{28th}, pages 2476--2494, 2017.

\bibitem{b-rsdoh-93}
M.~de~Berg.
\newblock \emph{Ray Shooting, Depth Orders and Hidden Surface Removal}.
\newblock Springer-Verlag New York, LNCS 703, 1993.

\bibitem{b-lsbsp-00}
M. de Berg.
Linear size binary space partitions for uncluttered scenes.
\alg 28:353--366 (2000).

\bibitem{bcko-cgaa-08}
M. de Berg, O. Cheong, M. van Kreveld, and M. Overmars.
\emph{Computational Geometry: Algorithms and Applications.}
Springer, 2008.

\bibitem{bg-vrscd-08}
M.~de Berg and C.~Gray.
Vertical ray shooting and computing depth orders for fat objects.
\sicomp 38:257--275 (2008).

\bibitem{bos-cvdo-94}
M.~de~Berg, M.~Overmars, and O.~Schwarzkopf.
Computing and verifying depth orders.
\sicomp 23:437--446 (1994).

\bibitem{bs-ca-95}
M.~de~Berg and O.~Schwarzkopf.
Cuttings and applications.
\ijcga 5:43--55 (1995).

\bibitem{c-cpp-86}
B.~Chazelle.
Convex partitions of polyhedra: A lower bound and worst-case optimal algorithm.
\sicomp 13:488--507 (1984).

\bibitem{c-chdc-93}
B.~Chazelle.
Cutting hyperplanes for divide and conquer.
\dcg 9:145--158 (1993).

\bibitem{cegpsss-ccclr-92}
B.~Chazelle,  H.~Edelsbrunner,  L. J.~Guibas,  R.~Pollack,  R.~Seidel,  M.~Sharir,  and J.~Snoeyink. Counting and cutting cycles of lines and rods in space.
\cgta 1:305--323 (1992). (A preliminary version appeared in
\focs{31st}, pages 242--251, 1991.)

\bibitem{g-ppsv-15}
L. Guth.
Polynomial partitioning for a set of varieties.
\emph{Math. Proc. Cambridge Phil. Soc.} 159:459--469 (2015).

\bibitem{hs-oplpa-01}
S.~Har-Peled and M.~Sharir.
Online point location in planar arrangements and its applications.
\dcg~26: 19--40 (2001).

\bibitem{m-rsehc-93}
J.~Matou\v{s}ek.
Range searching with efficient hierarchical cuttings.
\dcg 10: 157--182 (1993).

\bibitem{py-ebsph-90}
M.~S.~Paterson and F.~F.~Yao.
Efficient binary space partitions for hidden-surface removal and solid modeling.
\dcg 5:485--503 (1990).

\bibitem{py-obspo-92}
M.~S.~Paterson and F.~F.~Yao.
Optimal binary space partitions for orthogonal objects.
\jalg~13:99--113 (1992).

\bibitem{s-ccoris-98}
A.~Solan.
Cutting cycles of rods in space.
\socg{14th}, pages 135--142, 1998.

\bibitem{t-bspaafr-08}
C.D.~T\'{o}th.
Binary space partitions for axis-aligned fat rectangles.
\sicomp~38: 429--447 (2008).

\bibitem{wl-grma-94}
R.H.~Wilson and J.-C.~Latombe.
Geometric reasoning about mechanical assembly.
\emph{Artificial Intelligence}~71:371--396 (1994).

\end{thebibliography}


\end{document}